\newtheorem{thm}{Theorem}[section]
\newtheorem{lem}[thm]{Lemma}
\newtheorem{conj}{Conjecture}
\def\R{{\realv}} \def\C{{\compv}} \def\N{{\naturaln}}
\def\Z{{\integer}}
   \newcommand{\eq}[1]{(\ref{#1})}
\def\del{\partial}
\def\one{\mbox{1 \kern-.59em {\rm l}}}
\def\a{\alpha}
\def\d{\delta}
\def\l{\lambda}
\def\L{\Lambda}
\def\g{\gamma}
\def\nn{\nonumber}
\def\cM{{\cal M}}  
\def\cH{{\cal H}}  
\def\cB{{\cal B}}   
\def\cA{{\cal A}}  
\def\cC{{\cal C}}  
\def\cJ{{\cal J}}  
\def\cX{{\cal X}}  
\def\cN{{\cal N}}  
\def\cK{{\cal K}}
\def\cQ{{\cal Q}}
\def\mg{\mathfrak{g}}
\def\msu{\mathfrak{su}}
\def\mso{\mathfrak{so}}
\global\long\def\ket#1{\left|#1\right\rangle }
\global\long\def\realv{\mathds{R}}
\global\long\def\compv{\mathds{C}}
\global\long\def\integer{\mathds{Z}}
\global\long\def\naturaln{\mathds{N}}
\global\long\def\matunity{\mathds{1}}
\begin{document}

\renewcommand{\title}[1]{\vspace{11mm}\noindent{\Large{\bf #1}}\vspace{8mm}} 
\newcommand{\authors}[1]{\noindent{\large #1}\vspace{5mm}} 
\newcommand{\address}[1]{{\itshape #1\vspace{2mm}}}

\begin{titlepage}
\begin{flushright}
 UWThPh-2020-22 
\end{flushright}
\begin{center}
\title{ {\Large  Quantum (Matrix) Geometry  and  Quasi-Coherent States} }

\vskip 3mm

\authors{Harold C. Steinacker{\footnote{harold.steinacker@univie.ac.at}}
}
 
\vskip 3mm

 \address{ 

{\it Faculty of Physics, University of Vienna\\
Boltzmanngasse 5, A-1090 Vienna, Austria  }  
  }

\bigskip

\vskip 1.4cm

\textbf{Abstract}
\vskip 3mm

\begin{minipage}{14.5cm}%

A general framework is described which associates geometrical structures 
to any set of $D$ finite-dimensional hermitian matrices $X^a, \ a=1,...,D$.
This framework generalizes and systematizes the well-known examples of fuzzy spaces,
and allows to extract the underlying classical space without 
requiring the limit of large matrices or representation theory.
The approach is based on the previously introduced concept of 
quasi-coherent states.
In particular, a  concept of quantum K\"ahler geometry arises naturally, 
which includes the well-known quantized coadjoint orbits such as the fuzzy sphere
$S^2_N$ and fuzzy $\mathbb{C} P^n_N$.
A  quantization map for quantum K\"ahler geometries is established.
Some examples of quantum geometries which are not K\"ahler are identified,
including the minimal fuzzy torus.

\end{minipage}

\end{center}

\end{titlepage}

\tableofcontents{}

\section{Introduction}

It is expected on general grounds that the classical description 
of space-time geometry
is modified at very short length scales through quantum effects.
An interesting approach towards quantum geometry is based on 
quantized symplectic spaces, whose structure is similar to quantum mechanical 
phase space. Many examples of this type have been studied, 
starting with the fuzzy sphere $S^2_N$ 
\cite{Madore:1991bw,hoppe1982QuaTheMasRelSurTwoBouStaPro}, 
the fuzzy torus $T^2_N$ and more elaborate 2-dimensional 
spaces \cite{Arnlind:2010kw}, self-intersecting spaces such as 
squashed $\C P^2$ \cite{Steinacker:2014lma}, and many more.
A general class class is provided by quantized 
coadjoint orbits of compact semi-simple Lie groups. Many  classical features of the underlying symplectic space are encoded 
in their quantized version, which is based on the algebra of matrices $End(\cH)$
acting on a finite-dimensional Hilbert space $\cH$.

Of course, the notion of an algebra is not sufficient to define a geometry,
which should also contain a metric structure.
This extra structure arises in the context of Yang-Mills matrix models such as the IIB or IKKT model \cite{Ishibashi:1996xs}, which define a gauge theory on such fuzzy spaces.
In this context, a fuzzy space is specified by a  {\em set of hermitian matrices}
$X^a$ for $a=1,...,D$. These matrices not only generate the algebra of ``functions''
$End(\cH)$, but also naturally define a matrix Laplacian $\Box = \d^{ab} [X_a,[X_b,.]] $,
and  a Dirac-type operator $\slashed{D} = \Gamma_a[X^a,.]$ where $\Gamma_a$ are 
suitable Clifford or Gamma matrices. 
However rather than focusing on the spectral geometry\footnote{see e.g. \cite{Glaser:2019lcd} for related work in that context.} 
as in \cite{connes1995noncommutative}, 
we will emphasize a more direct approach based on 
(quasi-) coherent states defined through the matrices $X^a$, 
which provide a direct access to an underlying space $\cM$.

The obvious question is how to recover or extract the classical 
geometry underlying these quantized or ``fuzzy'' spaces, defined by the 
matrices $X^a$. 
For special cases such as quantized 
coadjoint orbits, one can construct {\em a sequence} of similar matrices
$X^a_{(N)} \in End(\cH_N)$, and show that the commutative description is recovered in the 
limit $N\to\infty$. 
This has led to the  attitude that the geometrical 
content of  fuzzy spaces can  only be obtained 
in some semi-classical  limit $N\to\infty$. 
However, such a limit is not  satisfactory 
from a physics point of view, where one would like to 
attach geometrical meaning to a given set of matrices $X^a$.
In particular, this is required to interpret numerical simulations of 
Yang-Mills matrix models \cite{Nishimura:2019qal,Aoki:2019tby,Kim:2011cr,Anagnostopoulos:2020xai}, which are viewed as candidates for a quantum
theory of space-time and matter.

The purpose of the present paper is to establish a 
natural framework of ``quantum geometry'', which can be 
 associated to {\em any} given set of $D$ hermitian matrices
without requiring any limit, and which may admit a {\em semi-classical} 
or {\em almost-local} description 
in some {\em regime}.
This is based on the previously introduced concept of quasi-coherent states 
\cite{Ishiki:2015saa,Schneiderbauer:2016wub},
which can be associated to any set of hermitian matrices. The  concept is 
very much string-inspired \cite{Berenstein:2012ts}, 
and the quantum geometries are naturally viewed as 
varieties or ``branes'' embedded in target space. Since the mathematical concepts
are very close to those of quantum mechanics, the name
``quantum geometry'' seems justified, even if that name is perhaps already over-loaded with 
different meanings in the literature. 

The framework nicely captures the standard examples of fuzzy spaces, but it is completely general. 
Moreover, it naturally leads to an intrinsic concept of quantum K\"ahler geometry, 
which is a special class of quantum geometries which satisfy certain
conditions\footnote{This is in distinct from 
the approach in  \cite{Ishiki:2016yjp}.}; there is no need to add any  
structure by hand. 
Of course for quantized coadjoint orbits, the coherent states are obtained easily 
from the representation theory.
However, the present construction allows to reconstruct the full K\"ahler structure 
of the (quantum) space {\em without resorting to representation theory},
which is not known in general.

In the semi-classical limit, many of the structures and steps 
have been considered before,
notably in work by Ishiki etal \cite{Ishiki:2015saa,Ishiki:2016yjp} and in \cite{Schneiderbauer:2016wub,Berenstein:2012ts,deBadyn:2015sca,Karczmarek:2015gda}. 
However, the novelty is in introducing a more abstract point of view.
We introduce the concept of an {\em abstract quantum space} $\cM$, 
by considering the space of quasi-coherent states as a sub-variety
of $\C P^N$. This allows to make concise statements
for finite $N$, and to give a clear conceptual correspondence 
between finite matrix configurations and geometry,
based on a space $Loc(\cH)\subset End(\cH)$ of almost-local operators. The semi-classical 
description applies in some infrared (IR) regime, while
the UV regime of matrix geometry displays a very different and stringy nature, 
which is manifest in string states.
This  framework also allows to establish the existence of a 
surjective quantization map for quantum K\"ahler manifolds, and to 
make some non-trivial regularity statements about the
abstract quantum space $\cM$.

It is important to note that 
the proposed framework is more than just some ad-hoc procedure:
by definition, the quasi-coherent states 
provide an optimal basis where all matrices have minimal joint uncertainty,
i.e. they are simultaneously ''almost-diagonal``.
Such almost-commuting configurations are expected to play a dominant role in Yang-Mills matrix models. The approach is well-suited to be implemented on a 
computer \cite{lukas_schneiderbauer_2016_45045,lukas_schneiderbauer_2019_2616687}, and 
should provide a powerful tool to 
understand and interpret the results of numerical simulations
of Yang-Mills matrix models.

This paper comprises 3 main parts. In section \ref{sec:quasicoherent}
we define the quasi-coherent states $|x\rangle$ for  $x\in\R^D$,
and study their properties as functions of $x\in\R^D$.
Much of this section is more-or-less known in some form, but at least the 
relation with solutions of the matrix-Yang-Mills equation is new.
In section \ref{sec:abstract-quantum}, we  introduce the central concept of 
an abstract quantum space $\cM\subset\C P^{N}$. This offers 
a conceptually clear definition of almost-local operators
and the semi-classical regime. It also leads to a natural 
concept of a real and complex quantum tangent space and quantum K\"ahler 
manifolds. Some consequences are developed in section \ref{sec:coherent},
notably a quantization map for quantum K\"ahler manifolds.
These concepts are  illustrated in a number of 
examples in section \ref{sec:examples},  and
the relation with physical Yang-Mills matrix models is briefly discussed in 
section \ref{sec:remarks}.


\section{Quasi-coherent states on $\R^D$}
\label{sec:quasicoherent}

In this paper, a {\bf matrix configuration} will be a collection of $D$ hermitian matrices $X^a \in End(\cH)$
acting on some (separable) Hilbert space $\cH$. To avoid technical complications, we will assume that 
$\cH\cong\C^N$ is finite-dimensional, apart from some illustrative infinite-dimensional examples.
Such a matrix configuration will be called {\bf irreducible} if  the only matrix which commutes
with all $X^a$ is the unit matrix. Equivalently, the algebra generated by the 
$X^a$ is the full matrix algebra $End(\cH)$. This will be assumed throughout.

By definition, such an irreducible matrix configuration does not admit
any common eigenvectors $|\psi\rangle$, since otherwise $|\psi\rangle\langle \psi|$
would commute with all $X^a$. Nevertheless, we are mainly interested in matrix configurations which are ''almost-commuting``, in the sense that the 
commutators $[X^a,X^b]$ are ''small'';  these are expected
to be the dominant configurations in Yang--Mills matrix models such as the 
IIB or IKKT model \cite{Ishibashi:1996xs}. We therefore wish to find 
a set of states which are optimally adapted to the matrix configuration,
so that the $X^a$ are ''as diagonal as possible``. This may also be of interest 
in different contexts.

With this in mind, we 
associate to an irreducible matrix configuration $X^a$ and a point $x\in\R^D$
the following 
{\bf displacement Hamiltonian}\footnote{
As explained in section \ref{sec:eff-metric-MM}, $H_x$
can be interpreted in the IIB model as energy of a point--brane at $x$ 
on the background defined by the matrix configuration $X^a$.} 
(cf. \cite{Ishiki:2015saa,Schneiderbauer:2016wub})
\begin{align}
 H_x := \frac 12\sum_{a=1}^D\, (X^a-x^a\one)^2 \ .
\end{align}
This is a  positive definite\footnote{To see positive-definiteness, assume that 
$H_x|\psi\rangle = 0$; this implies $X^a |\psi\rangle =x^a|\psi\rangle$ for all $a$,
but then $[H_x,|\psi\rangle\langle\psi|] = 0$ in 
contradiction with irreducibility.} hermitian operator  on $\cH$,
which can be thought of as an analog to the shifted harmonic oscillator.
It allows to find optimally localized approximate eigenstates 
for the given matrix configuration as follows.
Let $\l(x) > 0$ be the lowest eigenvalue of $H_x$. 
A {\bf quasi-coherent state}  $|x\rangle$ at $x$ is then defined following \cite{Ishiki:2015saa,Schneiderbauer:2016wub} as normalized vector
$\langle x|x\rangle = 1$
in the eigenspace $E_x$ of $H_x$ with eigenvalue $\l(x)$, 
\begin{align}
 H_x |x\rangle = \lambda(x) |x\rangle \ .
 \label{quasicoh-def}
\end{align}
We will assume for simplicity that $E_x$ is one-dimensional, 
except possibly on some singular set $\cK\subset\R^D$.
Clearly the quasi-coherent states $|x\rangle$ form a $U(1)$ bundle 
\begin{align}
 \cB  \ \to \tilde\R^D \qquad \mbox{over} \quad \tilde\R^D := \R^D \setminus\cK \ .
 \label{sing-set-def}
\end{align}
Standard theorems \cite{rellich1969perturbation,kato2013perturbation}
ensure that $\l(x)$ and $E_x$ depend smoothly
on $x\in\tilde\R^D$. 
We can then choose some local section  of $\cB$ near any given point $\xi\in\tilde\R^D$, 
denoted by $|x\rangle$.
Thus $\cK$ is the location where different 
eigenvalues of $H_x$ become degenerate. 
If $\l(x)$ can be extended smoothly at some point $p\in\cK$,
different eigenvalues simply touch without crossing, and
the sections $|x\rangle$ and the bundle
$\cB$ can be extended through $p$; 
we can then basically remove $p$ from $\cK$.
Hence we can assume that $\cK$ contains  
only points where some eigenvalues cross, i.e. $\l(x)$ cannot be 
continued.
We denote this $\cK$ as  {\bf singular set}.
The bundle is non-trivial if $\cK\neq 0$.

For any operator in $\Phi\in End(\cH)$, we can define the {\bf symbol} in $\cC(\tilde\R^D)$ through the map 
\begin{align}
 End(\cH) & \to \cC(\tilde\R^D)  \nn\\
 \Phi &\mapsto \langle x|\Phi|x\rangle =: {\bf\phi}(x) \ .
 \label{symbol-RD}
\end{align}
Elements of $End(\cH)$ will be indicated by 
upper-case letters, and functions by lower-case letters.
The map \eq{symbol-RD}
should be viewed as a de-quantization map, associating classical functions to 
noncommutative ``functions'' (or rather observables) in $End(\cH)$.
In particular, the symbol of the matrices $X^a$ provides a map
\begin{align}
   {\bf x^a} :\quad \tilde\R^D &\to \R^D  \\
    x &\mapsto {\bf x^a}(x):= \langle x|X^a|x\rangle \ .
    \label{X-ecpect-embedding-symbol}
\end{align}
Generically ${\bf x^a}(x) \neq x^a$, and the deviation is 
measured by the {\bf displacement}
\begin{align}
 d^2(x) := \sum_a({\bf x^a}(x) - x^a)^2 \ .
\end{align}
The quality of the matrix configuration (or of the underlying quantum space)
is measured by 
the {\bf dispersion} or uncertainty
\begin{align}
\d^2(x) &:= \sum_a (\Delta X^a)^2   \nn\\
 (\Delta X^a)^2 &:= \langle x|(X^a - {\bf x^a}(x))^2|x\rangle 
  = \langle x|X^a X^a|x\rangle
  - {\bf x^a}(x) {\bf x^a}(x) \ .
  \label{dispersion}
\end{align}
If $\d^2(x)$ is small, then the $X^a$ can be interpreted as operators or observables which approximate 
the functions ${\bf x^a}$ on $\tilde\R^D$, and if $d^2(x)$ is also small then $X^a \approx {\bf x^a} \approx x^a$.
Note that \eq{quasicoh-def} implies 
\begin{align}
 \l(x) = \d^2(x) + d^2(x) \ ,
 \label{l-delta-D}
\end{align}
hence a small $\l(x)$ implies that both $\d^2(x)$ and $d^2(x)$ are bounded  by $\l(x) > 0$.
$d^2(x)$ will be understood  in section \ref{sec:abstract-quantum} as  displacement 
of $x$ from the underlying quantum space or brane $\cM$.
Hence quasi-coherent states should be viewed as the states 
with minimal dispersion and displacement for given $x\in\tilde\R^D$,
cf. \cite{Schneiderbauer:2016wub} for a more detailed discussion.

\subsection{$U(1)$ connection, would-be symplectic form 
and quantum metric} 
\label{sec:inner-connect-symp}

Now we associate to any matrix configuration two unique tensors on $\tilde\R^D$: the {\em would-be symplectic form} $\omega_{ab}$ 
and {\em quantum metric} $g_{ab}$.
Since $|x\rangle \in\cH$, the bundle  $\cB$ over $\tilde\R^D$ naturally inherits a metric and a connection. 
We can define a connection 1-form $A$  via
\begin{align}
 P \circ d |x\rangle  &= |x\rangle i A , \qquad 
 iA := \langle x|d|x\rangle \quad \in\Omega^1(\tilde\R^D)
 \label{nabla-def}
\end{align}
where $P= |x\rangle\langle x|$ is the projector on $E_x$. Here $A$ is real because 
\begin{align}
 (\langle x |d|x\rangle)^* = d(\langle x |) |x\rangle = - \langle x |d|x\rangle \ ,
\end{align}
and transforms like a $U(1)$ gauge field 
\begin{align}
 |x\rangle \to e^{i\L(x)}|x\rangle, 
 \qquad A_a \to A_a +  \del_a \L \ .
\end{align}
In particular, we can 
parallel transport $|x\rangle$  along a path
$\gamma$ in $\tilde\R^D$. 
This connection is analogous to a Berry connection. It
is encoded in the inner product
\begin{align}
 \langle x|y\rangle =: e^{i \varphi(x,y) - D(x,y)} \ ,
 \label{coherent-inner}
\end{align}
which defines a distance function $D(x,y)$ and a phase function $\varphi(x,y)$
which satisfy
\begin{align}
 D(x,y) &= D(y,x) \geq 0, \qquad D(x,y) = 0 \ \Leftrightarrow \  x=y \nn\\
 \varphi(x,y) &= - \varphi(y,x) \ .
\end{align}
The phase clearly depends on the particular section $|x\rangle$ of the bundle $\cB$, while $D(x,y)$ does not.
To understand these two functions, we 
differentiate \eq{coherent-inner} w.r.t. $y$ 
\begin{align}
 \langle x |d_y|y\rangle|_{y=x} =  i d_y \varphi(x,y)|_{y=x} - d_y D(x,y)|_{y=x} \ .
\end{align}
Comparing with \eq{nabla-def} we conclude 
\begin{align}
i d_y \varphi(x,y)|_{y=x} &= i A = i A_a dx^a \nn\\
 d_y D(x,y)|_{y=x} &= 0   \ .
 \label{del-D-varphi}
\end{align}
Hence the phase $\varphi(x,y)$ encodes the connection
$A$.
For a contractible closed path $\g = \del\Omega$ in $\tilde\R^D$,
the change of the phase of $|x\rangle$ along $\g$
is hence given by the field strength via Stokes theorem
\begin{align}
 \oint_\g A = \int_\Omega dA\ .
 \label{flux-integral}
\end{align}
If the connection is flat, the phase $\varphi(x,y)$ can be gauged away completely.

To proceed, consider the gauge-invariant hermitian $D\times D$ matrix defined by
\begin{align}
 h_{ab} &= \big((\del_a +i A_a) \langle x|\big)
 (\del_b - i A_b)|x\rangle|_{\xi} = h_{ba}^*  \nn\\
&= (\del_{x^a}+i A_a)(\del_{y^b}- i A_b) e^{i \varphi(x,y) - D(x,y)}|_{\xi} \nn\\
   &=: \frac i2(\omega_{ab} + g_{ab})
   \label{hab-def}
\end{align}
 at some reference point $\xi\in\tilde\R^D$, which decomposes into 
 the real symmetric and antisymmetric tensors 
 $g_{ab}$ and $\omega_{ab}$.
The symmetric part  
\begin{align}
  g_{ab} &= \big((\del_a +i A_a) \langle x|\big)
 (\del_b - i A_b)|x\rangle + (a\leftrightarrow b) \nn\\
  &= (\del_a\langle x|)\del_b |x\rangle - A_a A_b+ (a\leftrightarrow b)
\label{g-def}
\end{align}
(using\eq{nabla-def})
is the pull-back of the Riemannian metric\footnote{Note that $g_{ab}$ is not related to the Euclidean metric $\d_{ab}$ on 
target space  $\R^D$. } on $\cH$ 
(or equivalently of the Fubini--Study metric on $\C P^{N-1}$) through 
the section $|x\rangle$.
The antisymmetric part of $h_{ab}$ encodes a 2-form 
\begin{align}
 i\omega_{ab} &= i(\del_a A_b - \del_b A_a)
   = (\del_{a} \langle x|)\del_{b}|x\rangle 
   - (\del_{b} \langle x|)\del_{a}|x\rangle \nn\\
 i\omega = \frac i2\omega_{ab} dx^a \wedge dx^b \
  &= d\langle x|\wedge d|x\rangle
  =  d(\langle x |d|x\rangle) = i dA
 \label{omega-def}
\end{align}
which is the $U(1)$ field strength of the connection $A$ and therefore closed,
\begin{align}
 \omega = dA , \qquad   d\omega = 0 \ .
 \label{F-omega-dA}
\end{align}
Assuming (local) translation invariance\footnote{Translational invariance holds at sufficiently short scales, cf. \eq{parallel-transport-states-loc}.},
it follows that the expansion of $\varphi(x,y)$ to quadratic order in $x$ and $y$ (setting $\xi = 0$)  is
\begin{align}
 \varphi(x,y) &=  A_a (y^a - x^a) - \frac 14 \omega_{ab}(x-y)^a (x-y)^b + ... \ .
 \end{align}
Similarly,  the expansion of $D(x,y)$ 
is given by 
\begin{align} 
 D(x,y) &= \frac 14 (x-y)^a (x-y)^b  g_{ab} + ...
  \label{D-expand}
\end{align}
since $D(x,y)$ is gauge invariant and satisfies $D(x,x) = 0$ and $D(x,y)\geq 0$.
In fact viewing $\cB/U(1)$ as subset of 
$\C P^{N-1}$, we can use the well-known formula
\begin{align}
 \cos^2(\g(x,y)) = e^{-2D(x,y)}
\end{align}
where $\g(x,y)$
is the geodesic distance squared between $|x\rangle$ and $|y\rangle$
 in the Fubini--Study metric on  $\C P^{N-1}$.
Combining \eq{coherent-inner} and \eq{D-expand}, 
we learn that the quasi-coherent states
are localized within a region of size 
\begin{align}
L_{\rm coh}^2 = \|g_{ab}\|^{-1}
 \label{coherence-scale}
\end{align}
denoted as {\bf coherence scale}.
The $|x\rangle$ are approximately constant
below this scale due to \eq{coherent-inner}. The relation with 
the uncertainty of $X^a$ will be given in \eq{uncertainty-X}.
Therefore $g_{ab}$ will be denoted as {\bf quantum metric}.
We will see in section \ref{sec:eff-metric-MM}
that there is a different, {\em effective} metric
which governs the low-energy physics on such quantum spaces in 
Yang-Mills matrix models. 
However, the intrinsic structure of the underlying quantum space
is best understood using a more abstract  point of view 
developed in section \ref{sec:abstract-quantum}.

We will see that $\omega$ typically
arises from a symplectic form on an underlying space $\cM$. 
Therefore $\omega$ will be denoted as {\bf would-be symplectic form}.
Since it is the curvature of a $U(1)$ bundle,
its flux is quantized
for every  2-cycle $S^2$ in $\tilde\R^D$ as
\begin{align}
 \int\limits_{S^2} \frac1{2\pi}\omega = n, \qquad n\in\Z \ .
 \label{quant-cond-S2}
\end{align}
This 
arises using \eq{flux-integral} as consistency condition on the $U(1)$ holonomy for the parallel transport 
along a closed path $\g$ on $S^2$.
In more abstract language,  $c_1 = -\frac{1}{2\pi} \omega$ is
the first Chern class of  $\cB$ viewed as  line bundle,
which is the pull-back of the first Chern class (or symplectic form) 
 of $\C P^{N-1}$ 
 via the section $|x\rangle$.
 The  bundle $\cB$ is trivial if these
numbers vanish for all cycles $S^2$,
 hence if $H^2(\tilde\R^D)$ vanishes.

\subsection{Differential structure of quasi-coherent states}
\label{sec:derivatives-generators} 

Assume that $|x\rangle$ is a local section of the quasi-coherent states, with
\begin{align}
 H_x|x\rangle = \l(x) |x\rangle \ .
 \label{H-EV-2}
\end{align}
Using Cartesian coordinates $x^a$ on $\R^D$,
we observe that 
\begin{align}
 \del_a H_x = -(X_a-x_a \one) \ .
 \label{del-H-X}
\end{align}
Thus differentiating \eq{H-EV-2} gives 
\begin{align}
 (H_x - \l(x))\del_a|x\rangle &=  -\del_a (H_x - \l(x))|x\rangle
  =  \big(X_a-x_a + \del_a\l\big)|x\rangle   \ .
\label{deform-eigenstate}
\end{align}
Since lhs is orthogonal to $\langle x|$, it follows that
\begin{align}
 0 = \langle x|\big(X_a-x_a + \del_a\l\big)|x\rangle   
 \label{exp-X-1}
\end{align}
so that the expectation value or symbol of the basic matrices $X_a$ is given by
\begin{align}
\boxed{
 {\bf x_a} = \ \langle x| X_a|x\rangle  = x_a - \del_a\l \ .
 }
 \label{expect-X}
\end{align}
Furthermore, \eq{deform-eigenstate} gives
(in the non-degenerate case under consideration)
\begin{align}
  \del_a |x\rangle &= |x\rangle \langle x| \del_a |x\rangle
    + (H_x - \l)^{-1}\big(X_a-x_a + \del_a\l\big)|x\rangle \nn\\
 (\del_a - i A_a) |x\rangle &=  (H_x - \l)^{-1}\big(X_a-x_a + \del_a\l\big)|x\rangle 
 \label{del-nabla-X-general}
\end{align}
using \eq{del-D-varphi}.
Even though the $(H_x - \l)^{-1}$ term is well-defined  here,
it is better to replace $(H_x - \l)^{-1}$ with 
an operator that is well-defined on $\cH$.  This is achieved using
the ``reduced resolvent''
\begin{align}
 (H_x - \l(x))^{'-1} := (\one-P_x) (H_x - \l(x))^{-1}(\one-P_x), 
 \qquad \qquad P_x := |x\rangle\langle x|
\end{align}
which satisfies 
\begin{align}
 (H_x - \l) (H_x - \l)^{'-1} &=\one- P_x = (H_x - \l)^{'-1}(H_x - \l), \nn\\
 (H_x - \l)^{'-1} |x\rangle &= 0 \ .
 \label{gen-inv-rel}
\end{align}
Observing  $(H_x - \l)^{'-1} (x_a - \del_a\l)|x\rangle = 0$ due to \eq{H-EV-2}, we can
write \eq{del-nabla-X-general} as
\begin{align}
\boxed{ \
  (\del_a - i A_a) |x\rangle = i\cX_a |x\rangle
  \ }
  \label{del-nabla-X-general-2}
\end{align}
for $\cX_a = -i(H_x - \l)^{'-1} X_a$. Since $(H_x - \l)^{'-1}|x\rangle = 0$,
this can be replaced by the hermitian generator
\begin{align}
\boxed{ \
  \cX_a := -i[(H_x - \l)^{'-1}, X_a] =  \cX_a^\dagger  \ .
   \ }
  \label{cX-def}
\end{align}
Moreover, we note
\begin{align}
  \langle x|\cX_a|x\rangle = 0 \ .
  \label{cX-exp-zero}
\end{align}
Hence $\cX_a$ generates the gauge-invariant
tangential variations of $|x\rangle$, which take value in the orthogonal
complement of $|x\rangle$.
This will be the basis for defining the quantum tangent space in section \ref{sec:abstract-quantum}.
The local section $|x\rangle$ over $\tilde\R^D$ can now be written as 
\begin{align}
 |x\rangle = P \exp\Big(i\int_\xi^x (\cX_a + A_a) dx^a\Big)|\xi\rangle
 \label{parallel-transport-states}
\end{align}
near the reference point $\xi\in \tilde\R^D$.
Here $P$ indicates path ordering, which is just a formal way of writing the 
solution of \eq{del-nabla-X-general-2}.
 In a small local neighborhood,
the $\cX_a$ are approximately constant, and $A_a$ can be gauged away. Then
\eq{parallel-transport-states} can be written as
\begin{align}
 |x\rangle \approx e^{i(x-y)^a \cX_a}|y\rangle \ ,
 \label{parallel-transport-states-loc}
\end{align}
which means that the $\cX_a$ generate the local translations on $\cM$.

\subsection{Relating the algebraic and geometric structures}

Since the derivatives of $|x\rangle$ are 
spanned by the $\cX^a|x\rangle$, the $U(1)$
field strength $\omega_{ab}$ and the 
quantum metric $g_{ab}$ should be related to 
 algebraic properties for the $\cX^a$.
Indeed,
starting from \eq{hab-def}
\begin{align}
 h_{ab} &= \langle x|\cX_a \cX_b|x\rangle
 = \frac i2(\omega_{ab} + g_{ab}) \ ,
\end{align}
we obtain 
\begin{align}
 i\omega_{ab} = h_{ab} - h_{ba} 
  &= \langle x|(\cX_a \cX_b - \cX_b \cX_a) |x\rangle
   \label{om-ab-XX}
\end{align}
and 
\begin{align}
 g_{ab} &= h_{ab} + h_{ba} =\langle x|(\cX_a \cX_b + \cX_b \cX_a) |x\rangle \ .
 \label{gab-XX}
\end{align}
 This provides a first link between the geometric and  algebraic 
 structures under consideration.
Furthermore, is useful to define the following hermitian
tensor (similar as in \cite{Ishiki:2016yjp})
\begin{align}
 P_{ab}(x) &:= \langle x| X_a (H_x - \l)^{'-1}
  X_b|x\rangle = P_{ba}(x) ^* \nn\\
  &=  i\langle x| X_a \cX_b|x\rangle \  = -i\langle x| \cX_a X_b|x\rangle \ .
  \label{P-explicit-herm}
\end{align}
Its symmetric part is obtained by taking  derivatives of
\eq{expect-X} 
\begin{align}
  \del_b {\bf x_a}(x) &= \del_b x_a - \del_b\del_a\l \nn\\
  &= \del_b\langle x|X_a|x\rangle
   = i\langle x|[X_a,\cX_b]|x\rangle \nn\\
 &= P_{a b} + P_{ba} 
\label{projector-embed}
\end{align}
lowering indices with $\d_{ab}$; for the antisymmetric part see  \eq{imaginary-P}.
This will be recognized as projector on the 
embedded quantum space  in \eq{P-M-proj}, as obtained
in the semi-classical limit in \cite{Ishiki:2016yjp}.

\subsection{Almost-local operators}
\label{sec:almost-local}

We would like to define a class 
$Loc(\cH) \subset End(\cH)$ of {\bf almost-local operators}
which satisfy
\begin{align}
\Phi|x\rangle  &\approx 
 |x\rangle \langle x|\Phi|x\rangle = P_x \Phi|x\rangle \ 
  = |x\rangle \phi(x)\qquad \forall x\in\tilde\R^D 
 \label{Phi-loc-approx}
 \end{align}
 where $\phi(x) = \langle x|\Phi|x\rangle$ is the symbol of $\Phi$, and $P_x = |x\rangle\langle x|$ is the 
 projector on the quasi-coherent state $|x\rangle$.
 The question is how to make  the meaning of 
$\, \approx\, $ precise, without considering some limit  
 as in \cite{Ishiki:2015saa}. We should certainly require that 
$\Phi|x\rangle \approx |x\rangle \phi(x)$  in $\cH$ for every $x$,
but it is not obvious yet how to handle the dependence on $x$,
and how to specify bounds.
The guiding idea is that it should make sense to identify  
$\Phi$ with its symbol 
\begin{align}
 \Phi \ \sim  \  \phi(x) = \langle x|\Phi|x\rangle \ ,
\end{align}
 indicated by $\sim$ from now on.
This will be made more precise 
in the section \ref{sec:quatiz-semi} by requiring that $\sim$ is an {\em approximate isometry} from
$Loc(\cH)$ to $\cC_{\rm IR}(\cM)$, where
$\cC_{\rm IR}(\cM)$ is a class of ``infrared'' functions on the abstract 
quantum space associated to the matrix configuration.
The essence of almost-locality is then that
the {\em integrated} deviations 
from classically are small compared with the classical 
values.
With this in mind, we  proceed to elaborate some  consequences
of \eq{Phi-loc-approx} for fixed $x$ without specifying bounds.

Since $(\one - P_x)$ is a projector, we have the estimate
\begin{align}
 \langle x|\Phi^\dagger\Phi|x\rangle 
  = \langle x|\Phi^\dagger P_x\Phi|x\rangle 
  + \langle x|\Phi^\dagger(\one - P_x)\Phi|x\rangle 
  &\geq \ \langle x|\Phi^\dagger P_x\Phi|x\rangle = |\phi(x)|^2 \ .
  \label{norm-loc-estim}
\end{align}
It follows that every hermitian almost-local operator $\Phi=\Phi^\dagger$ satisfies
\begin{align}
 \langle x|\Phi\Phi|x\rangle \approx \langle x|\Phi|x\rangle ^2
  = |\phi(x)|^2 \qquad \forall x\in\tilde\R^D\ ,
\label{uncertain-1}
\end{align}
i.e. the uncertainty
of $\Phi$ is negligible,
\begin{align}
 \langle x|(\Phi- \langle x|\Phi|x\rangle)^2|x\rangle \approx 0
 \qquad \forall x\in\tilde\R^D\ .
 \label{uncertain-Phi}
\end{align}
This means that  $(\Phi-\phi(x))|x\rangle$ is 
approximately zero, which in turn implies \eq{Phi-loc-approx}.
Therefore almost-locality is essentially equivalent to \eq{uncertain-Phi},
up to global considerations and specific bounds. 
A more succinct global version of \eq{uncertain-Phi} is given in 
\eq{norm-uncertainty}.

We also note that for two operators $\Phi,\Psi\in Loc(\cH)$
the factorization properties 
\begin{align}
 \Phi\Psi|x\rangle
 &\approx \Phi|x\rangle\langle x|\Psi|x\rangle 
 \approx |x\rangle \phi(x)\psi(x)\nn\\
 \langle x|\Phi\Psi|x\rangle 
 &\approx \phi(x)\psi(x) 
 \label{semiclass-fact}
\end{align} 
follow formally. However this does not mean that $Loc(\cH)$ is an algebra,
since the specific bounds may be violated by the product.
For some given matrix configuration, $Loc(\cH)$ may be empty or very small. 
This happens e.g. for the minimal fuzzy spaces  as discussed in section \ref{sec:degenerate}, and it is 
expected for random matrix configuration. 
But even in these cases, the associated 
geometrical structures still provide useful insights.

For interesting quantum geometries, we expect that 
all the $X^a$ are almost-local,  hence 
also polynomials $P_n(X)$ up to some maximal degree 
$n$ due to \eq{semiclass-fact}. $Loc(\cH)$ 
can often be characterized by some bound on the 
eigenvalue of $\Box$ \eq{Box}, or the uncertainty scale 
$L_{NC}$ \eq{L-NC-def}. However, 
$Loc(\cH)$ can never be more than a small subset of $End(\cH)$.

\subsection{Almost-local quantum spaces and Poisson tensor}
\label{sec:almost-loc-space}

To see how the Poisson structure 
arises,  define the real anti-symmetric matrix-valued function
\begin{align}
 \theta^{ab} := -i\langle x| [X^a,X^b] |x\rangle = -\theta^{ba}
 \label{theta-def}
\end{align}
on $\tilde\R^D$.
To relate it to the previous structures, we shall loosely follow \cite{Ishiki:2016yjp}, starting from
the identity
\begin{align}
  [X^a,X^b](X_b-x_b) + (X_b-x_b)[X^a,X^b]
   = 2 [X^a,H_x] \ . 
   \label{H-x-comm}
\end{align} 
Taking the expectation value, we obtain
\begin{align}
 \langle x| [X^a,X^b](X_b-x_b)|x\rangle + \langle x| (X_b-x_b)[X^a,X^b]  |x\rangle
  =  2 \langle x|[X^a,H_x] |x\rangle = 0 \ .
  \label{comm-X-rel}
\end{align}
If $X^a$ is almost-local\footnote{This is expected from the 
 definition of quasi-coherent states, as long as the uncertainty is 
sufficiently small.}, then this implies
\begin{align}
 0 \approx \langle x| [X^a,X^b] |x\rangle \langle x| (X_b-x_b)|x\rangle
  = - i\theta^{ab} \del_b\l
  \label{theta-del-l}
\end{align}
using \eq{expect-X}. In section \ref{sec:quatiz-semi}  we will see  that
this implies  $\l \sim {\rm const}$ on the embedded quantum space $\tilde\cM$,
and $P_{ac} + P_{ca} \sim \del_c x_a$ is its tangential projector.

We now define an {\bf almost-local quantum space} to be a matrix configuration where all $X^a$ as well as all $[X^a,X^b]$
are almost-local operators.  
Then they approximately commute, and we can proceed  following \cite{Ishiki:2016yjp}
\begin{align}
-2 (H_x - \l) (X^a -x^a + \del^a\l) |x\rangle
 &= 2 [X^a,H_x] |x\rangle
  \approx 2 (X_b-x_b)[X^a,X^b]  |x\rangle \nn\\
 &\approx 2(X_b-x_b) |x\rangle\langle x|[X^a,X^b]|x\rangle  
= 2i (X_b-x_b) |x\rangle\theta^{ab}\nn\\
&\approx 2i (X_b-x_b + \del_b\l) |x\rangle\theta^{ab}
\label{XX-asympt}
\end{align}
using  the factorization property, \eq{theta-del-l} and \eq{H-EV-2}.
However the first approximation is subtle, since $(X_b-x_b) |x\rangle \approx 0$.
This can be justified  if $X^a$ is a solution of the 
{\bf Yang-Mills equations}\footnote{This argument also goes through for 
the generalized Yang-Mills equation $\Box X^a \equiv [X_b,[X^b,X^a]] = m\, X^a$
as long as $m$ is sufficiently small, where $\Box$ is defined in \eq{Box}.}
\begin{align}
 [X_b,[X^b,X^a]] = 0 \ 
\end{align}
which are indeed the equations of motion for Yang-Mills matrix models \cite{Ishibashi:1996xs}.
Then \eq{H-x-comm}  implies
\begin{align}
  [X^a,H_x]
  &= (X_b-x_b)[X^a,X^b]
\end{align}
and the above steps become
\begin{align}
-2 (H_x - \l) X^a |x\rangle
 &= 2 [X^a,H_x] |x\rangle
  = 2 (X_b-x_b)[X^a,X^b] |x\rangle  \nn\\
&\approx 2i (X_b-x_b) |x\rangle\theta^{ab} \nn\\
&\approx 2i (X_b-x_b + \del_b\l) |x\rangle\theta^{ab} \ .
\label{XX-asympt-2}
\end{align}
The rhs is indeed orthogonal to $\langle x|$ due to  \eq{expect-X}, and  we can conclude
\begin{align}
 - (X^a -x^a + \del^a\l) |x\rangle
 &\approx i(H_x - \l)^{'-1} (X_b-x_b+ \del_b\l) |x\rangle\theta^{ab} \nn\\
 &= -\theta^{ab}\cX_b |x\rangle
 = i \theta^{ab} (\del_b -iA_b)|x\rangle
\end{align}
hence
\begin{align}
 (X^a -x^a + \del^a\l)|x\rangle \approx - i \theta^{ab} (\del_b -iA_b)|x\rangle
  \label{Xtheta-1}
\end{align}
and by conjugating
\begin{align}
 \langle x| (X^d -x^d + \del^d\l)  \approx
 i \theta^{dc} (\del_c + iA_c)\langle x|  \ .
  \label{Xtheta-2}
\end{align}
These relations are very useful. 
First, they imply the important relation
\begin{align}
\boxed {\ 
 [X^a,|x\rangle\langle x|] \approx - i  \theta^{ab} \del_b(|x\rangle\langle x|)
\ . \ }
\label{X-comm-theta}
\end{align}
Furthermore, multiplying \eq{Xtheta-1}
with $(\del_c+iA_c)\langle x|$ gives
\begin{align}
 - i \theta^{ab} \big((\del_c+iA_c)\langle x|\big)(\del_b -iA_b)|x\rangle
&\approx -i\langle x|\cX_c (X^a -x^a + \del^a\l) |x\rangle
 =  -i\langle x|\cX_c X^a |x\rangle = P_{c}^{\ a}  \ ,
 \label{theta-P-1}
\end{align} 
and similarly from \eq{Xtheta-2}
\begin{align}
i \theta^{ac} ((\del_c + iA_c)\langle x|)(\del_b-iA_b)|x\rangle 
 &\approx  i\langle x|X^a \cX_b|x\rangle
 =  P^{a}_{\ b}  \ .
  \label{theta-P-2}
\end{align}
Adding these and using  \eq{projector-embed} and \eq{hab-def}  gives 
\begin{align}
\boxed{\ 
 - \theta^{ab}\omega_{bc} 
  \approx \del_c {\bf x^a} =  \del_c (x^a -\del^a\l) 
 \ }
 \label{theta-omega}
\end{align}
in the semi-classical regime, as in \cite{Ishiki:2016yjp}. The rhs will be recognized as 
tangential projector on the embedded quantum space 
$\tilde\cM \subset \R^D$.
Therefore the above relation states that 
$\theta^{ac}$ is tangential to $\tilde\cM$,
and the inverse of the would-be
symplectic form $\omega_{ab}$ on $\tilde\cM$. 
This implies that 
$\omega|_{\tilde\cM}$ is indeed non-degenerate i.e. symplectic, and
$\theta^{ac}$ is its associated Poisson structure\footnote{Recall that the Jacobi identity is a consequence of $d\omega = 0$.}.
Together with \eq{theta-def} we obtain
\begin{align}
 [X^a,X^b]|x\rangle \approx i\{x^a,x^b\}|x\rangle  
 = i \theta^{ab}|x\rangle  \
\end{align}
which can be written in the  notation of section \ref{sec:quatiz-semi} as 
semi-classical relation
\begin{align}
\boxed{\ 
[X^a,X^b] \sim i\{x^a,x^b\}  = i \theta^{ab}  \ . 
 \ }
 \label{XX-theta-rel}
\end{align}
Moreover, this means that {\bf almost-local quantum spaces $\cM$
can be locally approximated by some Moyal-Weyl quantum plane} $\R^{2n}_\theta$.
In particular, this implies
that {\bf the almost-K\"ahler condition} \eq{almost-Kahler} holds at least approximately.
Furthermore, taking the inner product of
\eq{Xtheta-1} and \eq{Xtheta-2} 
we obtain 
\begin{align}
(\Delta X^a)^2 =
 \langle x| (X^a -x^a + \del^a\l)(X^a -x^a + \del^a\l)|x\rangle 
 &=  \theta^{ab}\theta^{ac} g_{bc} 
 \label{uncertainty-X}
\end{align}
(no sum over $a$), where $g_{bc}$ is the
quantum metric \eq{g-def}.
Hence the uncertainty of $X^a$ is 
characterized by the {\bf uncertainty length}\footnote{On 
 quantum K\"ahler manifolds, this reduces to the
well-known form $L_{\rm NC}^2 = \|\theta^{ab}\|$.} 
\begin{align}
 L_{\rm NC}^2  := \|\theta^{ab}\|^2 L_{\rm coh}^{-2} \ .
 \label{L-NC-def}
\end{align}
We also note the relation \cite{Ishiki:2016yjp}
 \begin{align}
  i \theta^{ac} g_{cb}  =  \d^{aa'}(P_{a'c} - P_{ca'})
  = 2i \d^{aa'}{\rm Im}(P_{a'c})
  \label{imaginary-P}
 \end{align}
which is obtained by subtracting \eq{theta-P-1} and \eq{theta-P-2};
in particular, $\theta^{ac} g_{cb}$ is antisymmetric.
 Finally, by comparing \eq{Xtheta-1} with \eq{cX-def}  we obtain
\begin{align}
 \theta^{ab} (\del_b -iA_b)|x\rangle \approx
 i(X^a -x^a + \del^a\l)|x\rangle
 =  (H_x-\l)i(\del^a-i A^a)|x\rangle \ ,
 \label{almost-Kahler}
\end{align}
which relates
$i(\del^a -iA^a)|x\rangle$ and $\theta^{ab}(\del_b -iA_b)|x\rangle$,
up to the action of $H_x-\l$. 

\section{The abstract quantum space $\cM$}
\label{sec:abstract-quantum}

In the previous section we considered the bundle $\cB$ 
of quasi-coherent states  $|x\rangle$ over $\tilde\R^D$.
However, these states 
often coincide for different $x$. 
In this section we develop a general concept of quantum geometry which naturally
captures such situations, and  leads to a variety $\cM \subset \C P^{N-1}$,
which is naturally embedded in $\R^D$.

Consider the union of the normalized quasi-coherent 
states for all $x\in\tilde\R^D$ 
\begin{align}
\cB &:= \bigcup_{x\in\tilde\R^D} U(1)|x\rangle \  \subset \  \cH \cong \C^N 
\end{align}
as a subset of $\cH$; here the union need not be disjoint. 
$\cB$ can be viewed  as a $U(1)$ bundle\footnote{in slight abuse of notation we use the same letter $\cB$ 
as in section \ref{sec:quasicoherent}, hoping that no confusion arises.}
\begin{align}
 \cB \to \cM , \qquad \cM := \cB/_{U(1)} \ \hookrightarrow \C P^{N-1} \ 
 \label{M-sub-CP}
\end{align}
over $\cM$.
We denote $\cM$ as {\bf abstract quantum space associated to $X^a$}.
Thus $\cM$ inherits the induced 
(subset) topology and metric from $\C P^{N-1}$. 
A matrix configuration will be denoted as {\bf quantum manifold}
if $\cM\subset\C P^{N-1}$ is a regular (real) submanifold.
This is not far-fetched, since
standard theorems \cite{rellich1969perturbation,kato2013perturbation} ensure
the existence of (local) smooth maps 
\begin{align}
 {\bf q}: \quad U\subset\tilde\R^D &\to \cM\subset \C P^{N-1}  \nn\\
   x &\mapsto |x\rangle \ .
   \label{q-map}
\end{align}
Hence $\cM$ is ``locally translation invariant'',
with generators inherited from the  $SU(N)$ symmetry of $\C P^{N-1}$.
However, ${\bf q}$ need not be injective.
To understand this better, we note that
\begin{align}
 \cM \ \cong  \ \tilde\R^D/ _\sim \ 
 \label{cM-equivalence-def}
\end{align}
where the equivalence relation $\sim$ on 
$\tilde\R^D$ is defined by identifying points $x\in\tilde\R^D$
with identical eigenspace $E_x$.
Denote the equivalence class through a point $x\in\tilde\R^D$ with $\cN_x$. 
Due to the identity
\begin{align}
 H_x = H_y  + \frac 12(x^ax_a -y^a y_a)\one - (x^a-y^a)X_a\ ,
\end{align}
$x\sim y$ implies that $|x\rangle$ is an eigenvector of $(x^a-y^a)X_a$,
\begin{align}
 (x^a-y^a)X_a |x\rangle \propto |x\rangle \ .
 \label{x-y-Nx-eq}
\end{align}
But this means that the {\em equivalence classes $\cN_x$ are always (segments of) straight lines or higher-dimensional planes}\footnote{The $\cN_x$ either extend to infinity or 
end up at the singular set $\cK$, where the $|x\rangle$ may turn into higher eigenstates.}, and
it follows using \eq{deform-eigenstate} that 
\begin{align}
 w_a\cX^a|x\rangle = 0 = w_a(X^a - x^a + \del^a\l)|x\rangle, \qquad w\in T\cN_x
 \label{V-anihil}
\end{align}
along such directions. This implies via \eq{gab-XX} that  $\cN_x$
is a null space w.r.t. the quantum metric $g_{ab}$ induced from $\C P^{N-1}$.
The quantum metric hence characterizes the dependence of the coherent 
states along the non-trivial directions of $\cM$.
Moreover, kernel of $d{\bf q}$ at $x$ is given by $T\cN_x$.

The above observations provide a remarkable link between local and global 
properties of ${\bf q}$: {\em whenever ${\bf q}(x) = {\bf q}(y)$ for $x\neq y$, 
a linear kernel $T\cN_x \ni (x-y)$ of $d{\bf q}|_x$ arises}.
In particular if  ${\rm rank}\, d{\bf q}=D$ i.e.  ${\bf q}$ is an immersion, 
${\bf q}$ must be injective globally, since otherwise $d{\bf q}$
has some non-trivial kernel. 
This implies that ${\bf q}$ can be extended to $\tilde\R^D$, and

\begin{thm}
 If ${\bf q}$ \eq{q-map} is an immersion, then ${\bf q}:\, \tilde\R^D \to \cM$ is bijective, and
 $\cM$ is a $D$-dimensional quantum manifold. Moreover, $x^a$ provide global coordinates.
 
\end{thm}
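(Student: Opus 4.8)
The plan is to decompose the statement into three parts: surjectivity of ${\bf q}$, injectivity of ${\bf q}$, and the regular-submanifold structure. Surjectivity is built into the definition: by \eq{cM-equivalence-def} the space $\cM = \cB/_{U(1)}$ is exactly the image $\{[|x\rangle]: x\in\tilde\R^D\}$ of ${\bf q}$, so there is nothing to prove there. The content is therefore to show that ${\bf q}$ is injective and that $\cM$ is a $D$-dimensional embedded submanifold on which the $x^a$ restrict to global coordinates.

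For injectivity I would run the contrapositive sketched just above the statement, making the kernel identification explicit. If ${\bf q}(x)={\bf q}(y)$ with $x\neq y$, then $|x\rangle$ and $|y\rangle$ are proportional, so off $\cK$ the one-dimensional eigenspaces agree, $E_x=E_y$, giving $x\sim y$ and $y\in\cN_x$. By \eq{x-y-Nx-eq} the class $\cN_x$ is a segment of a line or plane, so $w:=x-y$ lies in $T\cN_x$; then \eq{V-anihil} together with \eq{del-nabla-X-general-2} yield $w_a(\del^a-iA^a)|x\rangle = w_a\cX^a|x\rangle = 0$. Since $\langle x|\cX^a|x\rangle=0$ by \eq{cX-exp-zero}, this vector is orthogonal to $|x\rangle$ and hence represents a genuine nonzero element of $\ker d{\bf q}|_x$, contradicting the immersion hypothesis $\ker d{\bf q}\equiv 0$. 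Therefore $x=y$ and ${\bf q}$ is a bijection onto $\cM$.

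It remains to turn this bijective immersion into a diffeomorphism onto a regular submanifold. Locally this is automatic: by the local immersion (constant-rank) theorem, since $d{\bf q}$ has full rank $D$, each point of $\tilde\R^D$ has a neighbourhood on which ${\bf q}$ is a smooth embedding, so $\cM$ is locally a $D$-dimensional submanifold of $\C P^{N-1}$ and the $x^a$ are local coordinates via ${\bf q}^{-1}$; injectivity guarantees these charts never collide. The hard part will be the global upgrade from an injective immersion to a genuine \emph{embedding} carrying the subspace topology, the only obstruction being self-accumulation of the image, which the non-compactness of $\tilde\R^D$ does not automatically exclude. The cleanest remedy is to establish that ${\bf q}$ is proper, since an injective proper immersion is a closed embedding. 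Concretely I would rule out a sequence $x_n$ escaping every compact set while ${\bf q}(x_n)$ still converges in $\cM$: for $|x_n|\to\infty$ one uses that the ground state of $H_x$ tends, along $\hat x=x/|x|$, to the top eigenstate of $\hat x\cdot X$, and checks that such asymptotic states are not attained at finite $x$; for $x_n\to\cK$ one uses that $\l(x)$ cannot be continued across the singular set, which obstructs convergence of ${\bf q}(x_n)$ to an interior point. With properness in hand, ${\bf q}$ is a diffeomorphism onto the regular $D$-dimensional submanifold $\cM$, and ${\bf q}^{-1}$ exhibits the $x^a$ as global coordinates.
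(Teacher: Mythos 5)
Your injectivity argument is precisely the paper's: the entire content of the paper's proof is the observation, stated in the paragraph preceding the theorem, that ${\bf q}(x)={\bf q}(y)$ with $x\neq y$ forces $x-y\in T\cN_x$ and hence, via \eq{x-y-Nx-eq} and \eq{V-anihil}, a nonzero vector in $\ker d{\bf q}|_x$, so that an immersion must be globally injective. Up to that point you and the paper coincide.

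Where you genuinely diverge is in what comes after. The paper passes directly from ``injective immersion'' to ``$\cM$ is a regular $D$-dimensional submanifold with global coordinates $x^a$'' without further argument, whereas you correctly note that an injective immersion of the non-compact open set $\tilde\R^D$ need not be an embedding onto a regular submanifold: the image can self-accumulate, and the paper's own definition of a quantum manifold requires the subspace topology from $\C P^{N-1}$. Flagging this and proposing properness as the remedy is a real refinement over the paper. However, your properness step is a plan rather than a proof. Both limiting cases are asserted, not established: for $|x_n|\to\infty$ the paper itself remarks that $\cM$ may \emph{already contain} the asymptotic eigenstates of $e_aX^a$, so the claim that they are never attained at finite $x$ needs an argument; and for $x_n\to\cK$, non-continuability of $\l$ does not prevent the ground states $|x_n\rangle$ from converging to a vector inside the newly degenerate eigenspace at the crossing, and nothing immediately excludes that vector being $|y\rangle$ for some interior $y$. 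So the portion of your proof that goes beyond the paper remains open --- though, to be fair, the paper leaves exactly the same step implicit.
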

An infinite-dimensional example is given by the Moyal-Weyl quantum plane, and the fuzzy disk \cite{Lizzi:2003ru} is expected to
provide a finite-dimensional example.
However, there are many interesting examples (such as the fuzzy sphere, see section \ref{sec:fuzzy-S2})
where the rank of $d{\bf q}$ is reduced. We can still make  non-trivial statements with some extra assumption: 

A quantum space $\cM$ 
will be called  {\bf regular} 
if ${\rm rank}\, d{\bf q}=m$ is constant on $\tilde\R^D$. Then
the fibration $\tilde\R^D/_\sim$ is locally trivial, and
according to the rank theorem \cite{lee2013smooth}
we can choose  functions $y^\mu, \ \mu=1,...,m$ on a neighborhood of $\xi\in U \subset\tilde\R^D$ 
such that the image ${\bf q}|_U \subset \cM\subset \C P^{N-1}$ is a submanifold of $\C P^{N-1}$. 
Since the only possible degeneracies of ${\bf q}$ are the linear fibers $\cN$, it follows that

\begin{thm}
For regular quantum spaces i.e.
 for ${\rm rank}\, d{\bf q}=m$  constant, 
  $\cM$ is a $m$-dimensional quantum manifold.
 
\end{thm}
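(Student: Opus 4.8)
The strategy is to promote the local normal form furnished by the constant-rank theorem to a global embedded submanifold structure, using the fact established above that the fibres of ${\bf q}$ are precisely the linear equivalence classes $\cN_x$. First I would fix $\xi\in\tilde\R^D$ and invoke the rank theorem \cite{lee2013smooth}: since ${\rm rank}\,d{\bf q}=m$ is constant and $\ker d{\bf q}|_x = T\cN_x$, there are coordinates $(u^1,\dots,u^D)$ centred at $\xi$ on some $U$ and an adapted chart on a neighbourhood $V$ of $|\xi\rangle\in\C P^{N-1}$ in which ${\bf q}$ becomes $(u^1,\dots,u^D)\mapsto(u^1,\dots,u^m,0,\dots,0)$. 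In these coordinates the level sets $\{u^1=\dots=u^m=\text{const}\}$ are the local pieces of the leaves $\cN$, so the fibration $\tilde\R^D/_\sim$ is locally trivial with $(D-m)$-dimensional linear fibres, and the transverse functions $y^\mu$, $\mu=1,\dots,m$, realise ${\bf q}(U)$ as an $m$-dimensional submanifold slice of $\C P^{N-1}$.

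Next I would assemble these slices into $\cM$. The decisive global input is the consequence of \eq{x-y-Nx-eq}--\eq{V-anihil} that ${\bf q}(x)={\bf q}(y)$ forces $y\in\cN_x$: the only degeneracies of ${\bf q}$ are along the linear leaves. Hence the induced map $\tilde\R^D/_\sim\to\cM\subset\C P^{N-1}$ is a bijection onto $\cM$, and by the first step it restricts on each chart to a diffeomorphism onto a slice. Where two such charts about $\xi,\xi'$ have overlapping images, the transition between the transverse coordinates $y^\mu$ is smooth, being induced from the smooth structure of $\C P^{N-1}$; the charts are therefore mutually compatible and give $\cM$ a smooth $m$-dimensional atlas. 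This specialises, for $m=D$ (when no fibre is collapsed and ${\bf q}$ is an immersion), to the bijectivity already obtained in the previous theorem.

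The step I expect to be the main obstacle is upgrading this to the claim that $\cM$ is a \emph{regular} (embedded) submanifold carrying the subspace topology from $\C P^{N-1}$, as required by the definition of quantum manifold, rather than merely an injectively immersed one. The rank theorem only controls the image of the fixed neighbourhood $U$; a priori, points lying on leaves far from $\cN_\xi$ could have images accumulating at $|\xi\rangle$, so that $\cM\cap V$ is strictly larger than the slice ${\bf q}(U)$ and embeddedness fails. To close this I would show that the quotient map is locally proper, equivalently that the leaf space is Hausdorff: since each $\cN_x$ is an affine subspace pinned down by the eigenvalue condition \eq{x-y-Nx-eq}, and both $\l(x)$ and $E_x$ vary smoothly on $\tilde\R^D$, I would argue that leaves which are distinct near $\xi$ remain uniformly separated on a neighbourhood, so that after shrinking $V$ one has $\cM\cap V={\bf q}(U)\cap V$. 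Once this local properness is secured the injective immersion is an embedding and the slices become genuine submanifold charts, completing the identification of $\cM$ as an $m$-dimensional quantum manifold; I anticipate that this separation-of-leaves estimate, and not the rank-theorem bookkeeping, is where the real work lies.
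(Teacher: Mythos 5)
Your proof follows the same route as the paper's: the constant-rank theorem furnishes local slice charts, and the earlier observation that the only degeneracies of ${\bf q}$ are the linear fibres $\cN$ supplies global injectivity on the leaf space, from which the paper immediately concludes the theorem. The embeddedness issue you flag in your final paragraph --- that an injective constant-rank map need not have embedded image, so some local properness (separation of distant leaves whose images might accumulate at $|\xi\rangle$) is still required --- is a genuine subtlety, but the paper's own two-line argument does not address it either, so on this point you have gone beyond rather than fallen short of the published proof.
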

In particular, there are no self-intersections of $\cM$, and $\tilde\R^D$ has the structure of a bundle over $\cM$. 
Clearly local versions of this statement can also be formulated;
e.g. if the rank of $d{\bf q}$ is reduced at some point, $\cM$ may be ``pinched''.
Furthermore, it may seem natural to conjecture that $\cM$
is compact, since $\cH$ is finite-dimensional; however, the 
proper statement should be that $\cM$ has a natural compactification:
since $H_x \to -x_a X^a$ for $|x|\to\infty$, the state $|x\rangle$
approaches the lowest eigenspace of $e_a X^a$ for $e = \frac{x}{|x|}\in S^{D-1}$.
Hence if $\cM$ does not already contain these 
states, then  $\cM$ could be compactified by adding them (and possibly other states).

Now consider the following natural {\em embedding map} provided by the symbol of $X^a$:
\begin{align}
\boxed{
\begin{aligned}
   {\bf x^a} :\quad \cM &\to \R^D  \\
    |x\rangle &\mapsto {\bf x^a}:= \langle x|X^a|x\rangle 
    = x^a - \del^a\l
\end{aligned}  
    }
    \label{M-embedding-symbol}
\end{align}
using \eq{expect-X}. This is  the quotient of the 
previously defined function ${\bf x^a}$ \eq{X-ecpect-embedding-symbol} on $\tilde\R^D$, which 
is constant on the fibers $\cN_x$.
The image 
\begin{align}
\boxed{\ 
 \tilde\cM := {\bf x}(\cM) \quad \subset \R^D
 \ }
 \label{embedded-M-def}
\end{align}
defines some variety in target space $\R^D$. 
In this way, we can associate  to the abstract space $\cM$ 
a subset $\tilde\cM \subset\R^D$, and  
$\cB$ can be considered as a $U(1)$ bundle over $\tilde\cM$.
This structure defines the
{\bf embedded quantum space} or {\bf brane} associated to 
the matrix configuration.
The concept is very reminiscent of noncommutative branes in string theory, which is borne out in the context of Yang-Mills matrix models, 
cf. \cite{Seiberg:1999vs,Aoki:1999vr,Steinacker:2008ri}. However the embedding might be degenerate, and the 
abstract quantum space is clearly a more fundamental concept.

If  equivalence class $\cN_x$ of  $x$ is non-trivial, further interesting statements can be made.
Observe that $\l(x) = \d^2(x) + d^2(x)$  reduces on $\cN_x$ to 
the displacement $d^2(x)$ plus a constant shift
$c = \d^2(x)$. Therefore there is a unique $x_0\in \cN_x$ in each 
equivalence class where $\l$ assumes its minimum.
This provides a natural representative of $\cM\cong \tilde\R^D/_\sim$, 
and another embedding function 
\begin{align}
 x_0^a: \quad \tilde\R^D \to \cM \hookrightarrow\R^D
\end{align}
which is constant on the fibers $\cN$ and faithfully represents\footnote{This also provides the natural adapted coordinates 
implied by the constant rank theorem \cite{lee2013smooth}.} $\cM$.
It satisfies 
\begin{align}
 w_a({\bf x^a}(x_0) - x_0^a) = w_a\del^a\l|_{x_0} = 0
 \qquad \forall \ w\in T\cN_{x_0}
 \label{l-min-N}
\end{align}
using \eq{expect-X},
because $\l$ assumes its minimum on $\cN_{x_0}$ at $x_0$. 
Therefore ${\bf x^a}(x) = {\bf x^a}(x_0)$ provides the optimal estimator for $x_0$
in $\cN_x$, in the sense that 
\begin{align}
 x_0^a = P_x^\perp{\bf x^a}(x)
 \label{X-X0-proj}
\end{align}
 where 
$P_x^\perp$ is the orthogonal projector on $\cN_x$ w.r.t. the Euclidean metric on $\R^D$.
This provides justification for the numerical ``measuring algorithm'' in 
\cite{Schneiderbauer:2016wub,lukas_schneiderbauer_2016_45045}, and suggests further refinements.

\paragraph{Quantum tangent space.}

From now on, we will assume that $\cM$ is a quantum manifold.
Since $\cM\subset\C P^{N-1}$ is a (sub)manifold, we can determine its tangent space.
Choose some point $\xi\in\cM$.
The results of section \ref{sec:derivatives-generators} notably \eq{del-nabla-X-general-2} imply that $T_\xi\cM$ is spanned by
the $D$ vectors
\begin{align}
 (\del_a- i A_a)|x\rangle = i\cX_a|x\rangle \ \in T_\xi\C P^{N-1} \ ;
\end{align}
note that $\langle x|(\del_a - i A_a)|x\rangle=0$,
hence $i\cX_a|x\rangle$ is indeed a tangent vector\footnote{since $A_\mu$ can be gauged away at any given point, these are
derivatives of sections of the respective $U(1)$ bundles over $\cM$ and $\C P^{N-1}$, which can be taken as representatives 
of tangent vectors on $\cM$ and $\C P^{N-1}$, respectively.
Although the $\cX_a$ depend implicitly on $x$,
the result is independent of the point $x\in\cN_x$ because $\cM$ is a manifold.} 
of $\cM \subset \C P^{N-1}$, 
and perpendicular to the ``would-be vertical vector''  $i|x\rangle$.
According to \eq{V-anihil}, any $w\in T\cN_x$ provides a non-trivial relation $w^a\cX_a|x\rangle=0$. 
Hence after a suitable $SO(D)$ rotation, we can choose among the Cartesian coordinates on $\R^D$
$m$ local coordinates $x^\mu$ which are perpendicular\footnote{Since $\cN_x$ is in one-to-one correspondence with 
$\xi\in\C P^{N-1}$, we shall use this notation if appropriate.} 
to $\cN_\xi$, and can serve as local coordinates of $\cM$
near $\xi$. We denote these as local {\em ``normal embedding'' coordinates}  on $\cM$. 
It follows that an explicit basis of the tangent vectors in $T_\xi\cM$ is given by
$(\del_\mu- i A_\mu)|x\rangle = i\cX_\mu|x\rangle$ for $\mu=1,...,m$.
This provides a natural definition of 
the {\bf(real) quantum tangent space} of $\cM$:
\begin{align}
 T_\xi\cM = \Big\langle i\cX_\mu|x\rangle \Big\rangle_\R
 = \Big\langle i\cX_a|x\rangle \Big\rangle_\R \quad \subset \ \ T_\xi\C P^{N-1}
 \label{tangent-space-R}
\end{align}
with basis
$i\cX_\mu|x\rangle, \ \mu = 1,...,m$, so that $\dim T_\xi\cM = m = \dim\cM$.

One can now repeat the considerations in section \ref{sec:inner-connect-symp}, 
in terms of local coordinates 
$x^\mu,\ \mu=1,...,m$ on $\cM$.
Thus $\cM$ is equipped with a $U(1)$ connection 
\begin{align}
 iA = \langle x|d|x\rangle
 \end{align}
 and a closed 2-form \eq{F-omega-dA}
\begin{align}
 i\omega_\cM = d\langle x|d|x\rangle = \frac i2\omega_{\mu\nu} dx^\mu \wedge dx^\mu = i dA, \qquad  d\omega_\cM = 0
\end{align}
as well as a quantum metric $g_{\mu\nu}$, which are simply the 
pull-back of the symplectic structure and the  
Fubini--Study metric on $\C P^{N-1}$. 
These structures are intrinsic, and have nothing to do with target space $\R^D$.
Given the basis $i\cX_\mu|x\rangle$ of tangent vectors, we can evaluate the
symplectic form and the quantum metric in local embedding coordinates as
\begin{align}
 i\omega_{\mu\nu} 
  &= \langle x|(\cX_\mu \cX_\nu - \cX_\nu \cX_\mu) |x\rangle \nn\\
   g_{\mu\nu} &= \langle x|(\cX_\mu \cX_\nu + \cX_\nu \cX_\mu) |x\rangle \ .
\end{align}
It should be noted that the quantum tangent space $T_x\cM$
of the abstract quantum space is a subspace of $\C P^{N-1}$, and 
has a priori nothing to do with the embedding in target space $\R^D$.
This is  indicated by the attribute ``quantum''.
The embedding \eq{M-embedding-symbol} in target space induces another 
metric on $\cM$, which in turn is distinct from the effective 
metric  discussed in section \ref{sec:eff-metric-MM}.

It is tempting to conjecture that 
 for irreducible matrix configuration, $\omega_\cM$ is always non-degenerate, 
 and thus defines a symplectic form on $\cM$.
However this is not true, as demonstrated by the minimal fuzzy torus or minimal fuzzy $H^4$
 where $\omega_\cM$ vanishes, cf. section \ref{sec:examples}. 
But if there is a  semi-classical regime, 
$\omega_\cM$ is indeed non-degenerate  and thereby a symplectic manifold, as discussed in the next section\footnote{
For reducible matrix configuration $\omega_\cM$ may be degenerate even in the semi-classical regime.}. 
From now on we will mostly drop the subscript from $\omega_\cM = \omega$.

\paragraph{Embedded quantum space for almost-local quantum spaces.}

Now consider the tangent space $T\tilde\cM$  of the embedded brane 
$\tilde\cM \subset\R^D$ \eq{embedded-M-def}, which is spanned by $\del_\mu {\bf x^a}$
for any local coordinates on $\cM$.
This can be understood for almost-local quantum spaces,
following the semi-classical analysis of \cite{Ishiki:2016yjp}.
Recall the relation \eq{theta-omega} 
\begin{align} 
  \frac{\del}{\del x^c} {\bf x^a} \ \approx - \theta^{ab}\omega_{bc} 
\end{align}
as tensors on $\tilde\R^D$. It follows that $\theta^{ab}$ is non-degenerate on 
$\tilde\cM$. Then $0 \approx  i\theta^{ab} \del_b\l$ \eq{theta-del-l} 
implies that $\l$ is approximately constant on 
$\tilde\cM$,
and the derivative of $\l$ along the transversal fiber $\cN$ (approximately) vanish
on $\tilde\cM$ due to \eq{l-min-N}. Then \eq{expect-X} implies 
\begin{align}
{\bf x^a}(x) \approx x^a, \qquad
 \ \del_\mu {\bf x^a} \approx \del_\mu x^a 
  \label{embed-x-approx}
\end{align}
so that both tensors $\theta^{ab}$ and $\omega_{bc}$ are approximately tangential to $\tilde\cM$,
and inverse of each other on $\tilde\cM$. This is particularly transparent in normal embedding coordinates.
In particular, $\tilde\cM$ is the location where $\l$ assumes its ``approximate'' minimum, 
which was used in \cite{Schneiderbauer:2016wub,lukas_schneiderbauer_2016_45045} to numerically measure and picture such branes.
Then the embedding map \eq{M-embedding-symbol} is an immersion, 
but (the closure of)  $\tilde\cM \subset\R^D$ 
may have self-intersections,
as in the example of squashed $\C P^2$ \cite{Steinacker:2014lma}.
Both $\omega_{ab}$ and $g_{ab}$ vanish along the directions $w^a$ along the fiber $\cN$,
\begin{align}
 w^a\omega_{ab} = 0 = w^a g_{ab} \ , \qquad  w\in T \cN \ .
\end{align}
Finally, we can recognize \eq{projector-embed}
\begin{align}
 \del^a {\bf x^a}  = P^{ab} + P^{ba} 
\label{P-M-proj}
\end{align}
as tangential projector on $\tilde\cM\subset\R^D$, since 
the rhs vanishes along the fibers $\cN$.
This was obtained in \cite{Ishiki:2016yjp} in the semi-classical limit,
but that relation holds in fact exactly.

\subsection{Quantization map, symbol and semi-classical regime}
\label{sec:quatiz-semi}

Given the quasi-coherent states, we can define a {\bf quantization map}
 \begin{align}
  \cQ: \quad \cC(\cM) &\to End(\cH) \nn\\
   \phi(x) &\mapsto \int_\cM \phi(x) \,|x\rangle \langle x|
   \label{Q-map}
 \end{align}
which associates  to every classical 
 function on $\cM$ an operator or observable in $End(\cH)$.
 The integral on the rhs is defined\footnote{This is well-defined if 
 (the closure of) $\cM$ is a compact sub-manifold of $\C P^{N-1}$, which we shall assume.
 It is essential to use the abstract quantum space $\cM$ here, 
 otherwise the integral would typically not make sense.}  naturally via the symplectic volume form
 \begin{align}
 \int_\cM \phi(x) := \frac{1}{(2\pi\a)^n} \int\limits_\cM \Omega \, \phi(x) \ ,
 \qquad \Omega :=  \frac{1}{n!}\omega^{\wedge n} 
 \label{int-def}
 \end{align}
 (assuming  $\dim\cM = m=2n$),
where the normalization factor $\a$  is defined  by
\begin{align}
 N = Tr(\one) = \int_\cM 1 \ .
 \label{symp-vol}
\end{align} 
Semi-classical considerations suggest that $\a\approx 1$,
however this cannot hold in full generality, since the symplectic form is degenerate for the minimal fuzzy torus and the integral vanishes.
It would be desirable to find sufficient conditions for $\a\approx 1$, 
and a precise statement in particular for
the quantum K\"ahler manifolds discussed below. In any case, the trace is related to the intragel via
\begin{align}
 Tr\cQ(\phi) = \int_\cM \phi(x) \ .
 \label{Tr-Q}
\end{align}
 The map $\cQ$  cannot be injective, since $End(\cH)$ is finite-dimensional;
 the kernel is typically given by functions with high ``energy''.
 It is not evident in general if this map is surjective, which
 will be established below for the case of quantum K\"ahler manifolds.
 
 We can now  re-define the  {\bf symbol map} \eq{symbol-RD} more succinctly as
\begin{align}
 End(\cH) &\to \cC(\cM)   \nn\\
  \Phi &\mapsto \langle x|\Phi|x\rangle =: \phi(x) \ .
  \label{symbol}
\end{align}
Both sides have a natural norm and inner product, given by
\begin{align}
 \langle \Phi,\Psi\rangle = Tr(\Phi^\dagger\Psi)\quad \ \mbox{and}\quad \  
 \langle\phi,\psi\rangle = \int_\cM \phi(x)^*\psi(x)
\end{align}
leading to the Hilbert-Schmidt norm $\|\Phi\|_{HS}$ and the $L^2$ norm $\|\phi\|_2$, respectively. 
The symbol map can be viewed as de-quantization map, which makes sense for any quantum space 
in the present framework. 

The concept of almost-local operators discussed in section \ref{sec:almost-local} can now also be refined.
We re-define $Loc(\cH)\subset End(\cH)$ as a maximal (vector) space of 
operators such that the restricted symbol map 
\begin{align}
   Loc(\cH) &\to\cC_{\rm IR}(\cM) \nn\\
  \Phi \ &\mapsto  \langle x|\Phi|x\rangle =: \phi(x) 
  \label{symbol-M}
\end{align}
is an ``approximate isometry''
with respect to the Hilbert-Schmidt norm on $Loc(\cH) \subset End(\cH)$
and the $L^2$-norm on  $\cC_{\rm IR}(\cM)\subset L^2(\cM)$. 
We will then identify $\Phi \sim \phi$.
Approximate isometry means  that $|\|\phi\|_2 - 1| < \varepsilon$ whenever $\|\Phi\|_{\rm HS}=1$
for some given $0 < \varepsilon < \frac 12$, depending on the context.
Then the polarization identity implies
\begin{align}
 \langle\Phi,\Psi\rangle_{\rm HS} \approx \langle\phi,\psi\rangle_2 \ ,
\end{align}
hence an ON basis of $Loc(\cH)$ is mapped to a basis of $\cC_{\rm IR}(\cM)$
which is almost ON.
This defines the {\bf semi-classical regime}, which
can be made more precise in some given situation by specifying 
some $\varepsilon$. 
Accordingly, {\bf almost-local quantum spaces} are (re)defined as matrix configurations  
where all $X^a$ and $[X^a,X^b]$ are in $Loc(\cH)$.

Of course some given matrix configuration may be far from any semi-classical space, 
in which case $Loc(\cH)$ is trivial. 
However we will see that for almost-local quantum space, $Loc(\cH)$ 
typically includes the almost-local operators 
in the sense of \eq{Phi-loc-approx} up to some bound, and in particular polynomials in $X^a$
up to some order. Moreover, $\cQ$ is an approximate inverse of 
the symbol map \eq{symbol-M} on $Loc(\cH)$. Then the semi-classical regime
should contain a sufficiently large class 
of functions and operators to characterize the geometry to a satisfactory precision.

Let us try to justify these claims.
The first observation is that $\one \in  Loc(\cH)$, because its symbol is the constant function 
$1_\cM$, and the norm is preserved due to \eq{symp-vol}.
Conversely, we should show the {\em completeness relation} 
\begin{align}
\cQ(1_\cM) =  \int_\cM |x\rangle\langle x| \
\stackrel{!}{\approx} \ \one \ 
\label{one-approx}
\end{align}
which is equivalent\footnote{The following considerations would also go through if these relations 
hold with some non-trivial density.} to the trace identity
\begin{align}
 Tr\Phi = \int_\cM \langle x|\Phi|x\rangle \qquad \forall \Phi\in End(\cH) \ .
\end{align}
This is not automatic, since the integral vanishes e.g. on minimal $T^2_2$.
We can establish the completeness relation 
at least formally\footnote{A more precise statement \eq{one-H0-coherent} 
will be shown for
quantum K\"ahler manifold.}
(or rather approximately) for almost-local quantum spaces.
Indeed then \eq{X-comm-theta} implies
\begin{align}
 [X^a,\cQ(\phi)] &\approx -i\int_\cM \phi(x) \theta^{ab} \del_b(|x\rangle\langle x|)  \nn\\
  &= i\int_\cM \theta^{ab} \del_b\phi(x) |x\rangle\langle x| \nn\\
  &= \cQ(i\theta^{ab} \del_b\phi)
  \label{X-comm-Q}
\end{align}
because the integration measure $\Omega$ \eq{int-def} is invariant under 
Hamiltonian vector fields.
In particular, $\cQ(1_\cM)$ (approximately) commutes with all $X^a$, which by irreducibility
implies  $\cQ(1_\cM) \propto \one$, and \eq{one-approx} follows using 
the trace \eq{Tr-Q}.

Now assume that the completeness relation  holds to a sufficient precision.
Let $\Phi$ be an almost-local hermitian operator as defined in section \ref{sec:almost-local},
with symbol $\phi$.
Then the trace relation gives
\begin{align}
  \|\Phi\|_{\rm HS}^2 &\approx \int_\cM\langle x|\Phi\Phi|x\rangle \approx 
 \int_\cM \phi(x)^2 = \|\phi\|_2^2
\end{align}
using  \eq{Phi-loc-approx}.
Therefore  almost-local 
operators in the sense of \eq{Phi-loc-approx} are indeed contained in $Loc(\cH)$,
up to the specific bounds. 
Conversely, assume that $\|\Phi\|_{\rm HS} \approx \|\phi\|_2$ for hermitian $\Phi$. 
Then the completenes relation implies 
\begin{align}
 \|\Phi\|_{\rm HS}^2 \approx \int_\cM \langle x|\Phi\Phi|x\rangle 
 &\approx  \int_\cM \phi(x)^2 = \|\phi\|_2^2  \nn\\
 \int_\cM \langle x|(\Phi - \phi(x))(\Phi - \phi(x))|x\rangle
  &\approx 0
  \label{norm-uncertainty}
\end{align}
which implies that $(\Phi-\phi(x))|x\rangle \approx 0$
$\forall x\in\cM$. Hence they are approximately local in the sense of \eq{Phi-loc-approx}. In particular they  
approximately commute due to \eq{semiclass-fact},
\begin{align}
 \Phi \Psi \approx \Psi \Phi, \qquad \Phi, \Psi \in Loc(\cH) \ .
\end{align}
 Hence the above definition of $Loc(\cH)$ is a refinement of the 
definitions in section \ref{sec:almost-local},
turning the local statements into global ones.

The  image $\cC_{\rm IR}(\cM)$ is typically given by functions which are  slowly varying  on the 
length scale  $L_{\rm coh}$, corresponding to the
semi-classical or infrared regime.
To see that $\cQ$ is approximately inverse to the symbol map, 
we note that the completeness relation  implies 
\begin{align}
 |y\rangle &\approx \int_\cM |x\rangle\langle x|y\rangle \ .
\end{align} 
 This means that 
\begin{align}
 \langle x|y\rangle \approx \d_y(x)
\end{align}
for any $y\in\cM$ w.r.t. the  measure \eq{int-def}, 
consistent with   
$|\langle x|y\rangle| \sim e^{-\frac 12(x-y)_g^2}$ \eq{coherent-inner} \eq{D-expand}. Then 
 \begin{align}
 \cQ(\phi)|y\rangle &\approx \int_\cM \phi(x)|x\rangle\langle x|y\rangle 
\approx \phi(y)|y\rangle .
\label{Q-phi-loc}
\end{align}
for functions $\phi(x)$
which are slowly varying on $L_{\rm coh}$.
Therefore $\cQ(\phi)$ is almost-local and hence 
$\cQ(\phi)\in Loc(\cH)$ for slowly varying $\phi$, 
and moreover 
$\cQ$ is approximately the inverse of the symbol map on $Loc(\cH)$, since
\eq{Q-phi-loc} gives
\begin{align}
 \langle y|\cQ(\phi)|y\rangle &\approx \phi(y) \ .
\end{align}
For almost-local quantum spaces, $Loc(\cH)$ contains in particular 
the basic matrices
\begin{align}
 X^a \approx \int_{\cM} {\bf x^a} |x\rangle\langle x| \ .
\end{align}
The approximation is good as long as the classical function 
${\bf x^a}$ is approximately constant on $L_{\rm coh}$.
Moreover,
\eq{XX-theta-rel} gives the approximate commutation relations on $\cM$ 
\begin{align}
 [X^a,X^b] \sim i \theta^{ab} = i\{x^a,x^b\} \ .
\end{align}
We have seen that $\theta^{ab}$ is tangential to 
$\cM$ and the inverse of the symplectic form $\omega$ on $\cM$, hence 
$\{x^a,x^b\}$ are  Poisson brackets on $\cM$. 
In this sense, the semi-classical geometry is encoded in the matrix configuration $X^a$.
These observations are summarized in table \ref{tab:correspondence}.
\begin{table}[h]
\begin{center}
 \begin{tabular}{c c c}
  $Loc(\cH) \subset End(\cH)$ & $\sim$ &  $\cC_{\rm IR}(\cM) \subset L^2(\cM)$  \\  \hline 
$ \Phi  $ & $\sim$ & $ \phi(x) = \langle x|\Phi|x\rangle$  \\[1ex] 

$ X^a  $ & $\sim$ & $ {\bf x^a}(x) $  \\[1ex] 

$  [.,.] $ & $\sim$ & $ i\{.,.\} $  \\[1ex]

$  Tr $ & $\sim$ & $ \int_\cM $  \\[1ex] 
$\Box$ & $\sim$ &  $e^{-\sigma}\Box_G$  \\   
 \end{tabular}
\caption{Correspondence between almost-local operators 
and infrared functions on $\cM$ for almost-local quantum spaces. The  metric structure is encoded 
in the Laplacian $\Box$ \eq{Box-G}.}
\label{tab:correspondence}
\end{center}
\end{table}
 This provides the starting point of the emergent geometry and gravity
considerations in \cite{Steinacker:2010rh,Steinacker:2019fcb}, 
which will be briefly 
discussed in section \ref{sec:eff-metric-MM}.

The above Poisson structure extends trivially to 
$\tilde \R^D$, which for $D>\dim\cM$
decomposes into symplectic leaves of $\omega_{ab}$ that are preserved by the Poisson structure. Functions which are constant
on these leaves then have vanishing Poisson brackets, which leads to a degenerate effective metric 
as discussed in section \ref{sec:eff-metric-MM}.

 In the UV or deep quantum regime, 
 the above semi-classical picture is no longer justified, and in fact 
 it is very misleading.
 In particular, consider {\em string states} which are defined as 
 rank one operators built out of  quasi-coherent states 
\cite{Steinacker:2016nsc,Iso:2000ew}
\begin{align}
\psi_{x,y}
 &:= |x\rangle\langle y | \qquad \in End(\cH)  \ .
 \label{string-states}
\end{align}
They are highly non-local for $x\neq y$, and should not be interpreted as function
but rather as open strings (or dipoles) linking $|y\rangle$ to $|x\rangle$ 
on the embedded brane $\tilde\cM$.
These states provide a complete and more adequate picture of $End(\cH)$,
and  exhibit the stringy nature of noncommutative field theory 
and Yang-Mills matrix models \cite{Steinacker:2016nsc}. 
This means that the physical content of Yang-Mills matrix models, 
and more generally of noncommutative field theory, 
is much richer than suggested by the semi-classical limit.
In particular, string states arise as high-energy excitation modes,  leading to 
UV/IR mixing in noncommutative field theory \cite{Minwalla:1999px}.
This is a phenomenon which has no counterpart in
conventional (quantum) field theory.

\subsection{Complex tangent space and quantum K\"ahler manifolds}
\label{sec:Kahler}

Now we return to the exact analysis. For any quantum manifold
$\cM$, the  embedding   $\cM \to \C P^{N-1}$
induces the tangential map
\begin{align}
 T_\xi\cM &\to T_\xi\C P^{N-1} \ .
  \label{M-embed-CP-d}
\end{align}  
Now we take into account that $\C P^{N-1}$  carries an intrinsic complex structure
\begin{align}
  \cJ:\quad  T_\xi\C P^{N-1} &\to  T_\xi\C P^{N-1}, \qquad  \cJ v = i v
 \label{complex-intrinsic}
\end{align}
 for any $v\in T_\xi\C P^{N-1}$. Accordingly,
$T\C P^{N-1} \cong T^{(1,0)}\C P^{N-1}$ can be viewed as holomorphic
tangent bundle, thus bypassing an explicit complexification of its real tangent space.
With this in mind, we  define the {\bf complex quantum tangent space} of $\cM$ as
\begin{align}
 T_{\xi,\C}\cM :=  \Big\langle\cX_a|x\rangle \Big\rangle_\C \quad \subset \ \ T_\xi\C P^{N-1} \cong  T_{\xi,\C}\C P^{N-1} \ ,
 \label{tangent-space-C}
\end{align}
 which also carries the  complex structure 
\begin{align}
 \cJ \cX_a|x\rangle := i\cX_a|x\rangle \quad \in T_{\xi,\C}\cM \ , \qquad 
 \cJ^2 = -\one \ .
\end{align}
Again, this complex tangent space 
is not necessarily the complexification of the real one.
Using the basis $i\cX_\mu|x\rangle, \ \mu = 1,...,m$ of $T_{\xi}\cM$
which arises in normal embedding coordinates, there may be relations 
of the form
\begin{align}
 (i \cX_\mu - J_\mu^{\ \nu}\cX_\nu)|x\rangle = 0 \quad \mbox{for} \quad
 J_\mu^{\ \nu} \in\R \ ,
 \label{complex-tangent-rel}
\end{align}
so that $T_{\xi,\C}\cM$ has reduced dimension over $\C$.
We will see that for quantum K\"ahler manifolds as defined below,
the complex dimension is half of the same as the real one.

\paragraph{Quantum K\"ahler manifolds.}

Consider the maximally degenerate case where the complex dimension of
 $T_{\xi,\C}\cM$ is given by $n= \frac m2 \in\N$ where $m= \dim_\R\cM$.
Then  $T_{\xi}\cM$ is stable under the  complex
structure operator $\cJ$ 
 \begin{align}
  T_{\xi,\C}\cM = T_{\xi}\cM
  \label{Kahler-cond}
 \end{align}
so that $T_\xi\cM$ should be viewed as holomorphic tangent space of $\cM$.
But this implies that $\cM\subset \C P^{N-1}$ is a complex sub-manifold (i.e. defined by holomorphic equations), 
cf. \cite{voisin2003hodge} or Proposition 1.3.14 in \cite{Baouendi:1999uya}.
Such quantum manifolds $\cM$ will be called {\bf quantum  K\"ahler manifolds}, 
for reasons explained below.
Indeed, all complex sub-manifolds  of $\C P^{N-1}$ are known to be K\"ahler.
 Note that  this is
an intrinsic property of a quantum space $\cM$, and no extra structure is introduced here: 
$\cM$ either is or is not of this type\footnote{
It is interesting to note that due to \eq{almost-Kahler},
$H_x$  preserves the complex tangent space $T_{\xi,\C}\cM$,
at least in the semi-classical regime. However, \eq{almost-Kahler} is still weaker than the K\"ahler condition.}.
We will see that this includes the well-known quantized or ``fuzzy'' spaces  
arising from quantized coadjoint orbits\footnote{It is worth pointing out that that $\C P^{N-1}$ is itself a quantum K\"ahler manifold,
as  minimal fuzzy $\C P^{N-1}_N$.}.

Consider the quantum K\"ahler case in more detail.
We can  introduce a local holomorphic parametrization of $\cM\subset\C P^{N-1}$ near $\xi$ 
 in terms of  $z^k\in\C^n$.
 Then any  local (!) holomorphic section of the tautological line bundle over 
$\C P^{N-1}$ defines via pull-back a local holomorphic section 
 of the line bundle
\begin{align}
 \tilde\cB := \bigcup_{x\in\tilde\R^D} E_x \to \cM \ \hookrightarrow \C P^{N-1}
\end{align}
over $\cM$, denoted by $\|z\rangle$. 
 This $\|z\rangle$ can be viewed as holomorphic $\C^N$-valued function
on $\cM$, which satisfies
\begin{align}
 \frac{\del}{\del \bar z^k} \|z\rangle = 0, \qquad  \|z\rangle\big|_\xi = |\xi\rangle
 \label{holo-anihil-coh}
\end{align} 
where $\bar z^k$ denotes the complex conjugate of $z^k$.
 Hence $\|z\rangle$  arises from $|x\rangle$ 
 through a re-parametrization and gauge transformation along with a 
 non-trivial normalization\footnote{$\|z\rangle$
cannot be normalized, since e.g. $\langle y\|z\rangle$ must be holomorphic in $z$.
Apart from that, $\tilde\cB$ is equivalent to $\cB$.} factor;
 this is indicated by the double line in $\|z\rangle$.
 In other words, the differential of the section
\begin{align}
 d\|z\rangle = dz^k \frac{\del}{\del z^k}\|z\rangle \qquad \in\ \Omega^{(1,0)}_{z}\cM 
 \label{holo-diff-sect}
\end{align}
is a  $(1,0)$ one-form.
Given this holomorphic one-form $d\|z\rangle$ and the hermitian inner product on $\cH$,
we naturally obtain a $(1,1)$ form
\begin{align}
\omega := (d\|z\rangle)^\dagger \wedge d\|z\rangle 
 &= \omega_{\bar k l} d\bar z^k \wedge dz^l \qquad \in \ \Omega^{(1,1)}_z\cM \nn\\
 \omega_{\bar k l} &= (d_k\|z\rangle)^\dagger d_l\|z\rangle 
\end{align}
which is closed,
\begin{align}
 d\omega = -(d\|z\rangle)^\dagger \wedge d d\|z\rangle + (d d\|z\rangle)^\dagger \wedge d\|z\rangle = 0
\end{align}
using holomorphicity of $\|z\rangle$.
This is the  K\"ahler form, which encodes the 
$\omega_{ab}$ in \eq{omega-def}. 
As in \eq{hab-def},
we can then define  the hermitian metric
\begin{align}
 h(X,Y) &=   \big((d\|z\rangle)^\dagger \otimes d\|z\rangle \big)(X,Y)
  \qquad \in T^{(1,1)}
  \label{h-def-general}
\end{align}
whose imaginary and real part define the symplectic form and the quantum metric via
\begin{align}
  \omega(X,Y) &= -i(h(X,Y) - h(Y,X)^*) = -  \omega(Y,X)  \nn\\
  g(X,Y) &= h(X,Y) + h(X,Y)^* = g(Y,X) \ .
\end{align}
Since $h\in T^{(1,1)}$, they satisfy the compatibility condition
\begin{align}
 \omega(X,\cJ Y) &=  -i(h(X,\cJ Y) - h(\cJ Y,X)^*) \nn\\
   &= -i(ih(X,Y) +i h(Y,X)^*)  \nn\\
    &= g(X,Y)
    \label{Kahler-condition}
\end{align}
 (recall that $\cJ=-i$ on anti-holomorphic $(0,1)$ forms).
This means that $\cM$ is a K\"ahler manifold,
and the name ``quantum K\"ahler manifold''
indicates its origin from the matrices $X^a$.
In particular, the coherence length $L_{\rm coh}$ and the uncertainty scale 
$L_{NC}$ coincide.

Now we relate this to the local generators $\cX_\mu$ \eq{del-nabla-X-general-2}, \eq{tangent-space-R}.
Introducing real coordinates $z^k = z^k(x^\mu)$ where   
$x^\mu$ are the local (Cartesian) embedding coordinates introduced above, 
the holomorphicity relation \eq{holo-anihil-coh} can  be expressed using \eq{del-nabla-X-general-2} as
\begin{align}
 0 = \frac{\del}{\del \bar z^k} \|z\rangle 
 = \frac{\del x^\mu}{\del \bar z^k} \frac{\del}{\del x^\mu} \|z\rangle 
 = i\frac{\del x^\mu}{\del \bar z^k}\, (\cX_\mu+A_\mu) \|z\rangle \ .
\end{align}
Similarly,
\begin{align} 
  \frac{\del}{\del z^k} \|z\rangle
  = \frac{\del x^\mu}{\del z^k} \frac{\del}{\del x^\mu} \|z\rangle 
  = i\frac{\del x^\mu}{\del z^k}(\cX_\mu+A_\mu) \|z\rangle \ .
\end{align}
We can now introduce new generators\footnote{The $\cA^k, \bar \cA_l$ are matrix-valued functions on $\cM$
just like the $\cX_\mu$, while the $X_a$  are ``constant'' matrices.} 
$\cA^k, \bar \cA_l$ via 
\begin{align}
 \cA^k &= i\frac{\del x^\mu}{\del \bar z^k}\,(\cX_\mu+A_\mu) \nn\\
 \bar\cA_k &= i\frac{\del x^\mu}{\del z^k}\, (\cX_\mu+A_\mu)
\end{align}
so that 
\begin{align}
 \cA^k \|z\rangle  &= 0,   \qquad 
 \bar\cA_k \|z\rangle  =  \frac{\del}{\del z^k} \|z\rangle \ .
\end{align}
These are clearly the analogs of the standard annihilation properties 
of coherent states.
It is hence appropriate to denote the $\|z\rangle$ on
quantum K\"ahler manifolds as {\bf coherent states}.
Then
\begin{align}
 T_{\xi,\C}\cM = \Big\langle \bar\cA_k \|z\rangle \Big\rangle_{\C} \ \cong \C^n \, \qquad k=1,...,n .
\end{align}
The metric tensor and the symplectic form are then determined as usual by the K\"ahler form
\begin{align}
 i\omega_{\bar k l} &= (d_k\|z\rangle)^\dagger d_l\|z\rangle 
 = \langle z\| \bar\cA_k^\dagger \bar\cA_l\|z\rangle 
\end{align}
which arises from a local K\"ahler potential, 
\begin{align}
 \omega_{\bar k l} &= -\frac 12 \bar\del_k\del_l \rho
\end{align}
given by the restriction of the 
(Fubini--Study) K\"ahler potential  on $\C P^N$.

This provides a rather satisfactory concept of quantum K\"ahler geometry, which arises in a natural way from the complex structure in the 
Hilbert space.
There is no need to invoke any semi-classical or large $N$ limit.
Not all quantum spaces are of this type,  
a counterexample being the minimal fuzzy torus $T^2_2$ 
as discussed in section \ref{sec:min-fuzzy-T2}. 
In \cite{Ishiki:2016yjp}, it is claimed that all quantum manifolds are K\"ahler in the 
semi-classical limit, based on \eq{imaginary-P}. However this refers to a
different almost-complex structure and metric which is not intrinsic.
From the present analysis, there is no obvious reason why all 
quantum manifolds should be K\"ahler, even in the semi-classical limit.

Since for non-K\"ahler manifolds the 
complex tangent space $T_{\C}\cM$ is higher-dimensional, 
quantum effects due to loops in Yang-Mills matrix models
may be more significant, and the geometric trace formula 
(2.38) in \cite{Steinacker:2016nsc} for string states would  need to be replaced with some 
higher-dimensional analog.
This suggests that quantum K\"ahler manifolds may be protected 
by some sort of non-renormalization theorems.

\section{Coherent states and quantization map for quantum K\"ahler manifolds}
\label{sec:coherent}

We can establish the following  lemma, which is well-known for
 standard coherent states:

\begin{lem}
 \label{lemma-diagonal}
 Let $|x\rangle$ be the coherent states of a quantum K\"ahler manifold $\cM$,
 and $\cH_0\subset\cH$ their linear span.
 Assume $A\in End(\cH_0)$ satisfies $\langle x|A|x\rangle = 0$ for all $x\in\cM$.
 Then $A=0$.
 
\end{lem}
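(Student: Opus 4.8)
The plan is to exploit the defining holomorphicity of the coherent states on a quantum K\"ahler manifold, which converts the vanishing of the diagonal matrix elements into the vanishing of the full two-point kernel $\langle u\|A\|w\rangle$ by analytic continuation. First I would pass from the normalized states $|x\rangle$ to the local holomorphic sections $\|z\rangle$ of $\tilde\cB$ from \eq{holo-anihil-coh}. Since $\|z\rangle$ differs from $|x\rangle$ only by a nonvanishing scalar normalization factor, the hypothesis is equivalent to
\begin{align}
 \langle z\| A \|z\rangle = 0 \qquad \forall\, z \ .
\end{align}
Because $\frac{\del}{\del \bar z^k}\|z\rangle = 0$, the bra $\langle z\|$ depends holomorphically on $\bar z$, so the two-point kernel
\begin{align}
 F(u,w) := \langle u\| A \|w\rangle
\end{align}
is antiholomorphic in $u$, holomorphic in $w$, and vanishes on the diagonal $u = w$.

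The crucial step is to upgrade this diagonal vanishing to $F\equiv 0$. Working in a local holomorphic chart centered at a reference point $\xi$, I would expand the holomorphic section as a convergent Taylor series $\|z\rangle = \sum_\beta v_\beta\, z^\beta$ with vector coefficients $v_\beta\in\cH_0$, so that
\begin{align}
 F(u,w) = \sum_{\alpha,\beta} \big(v_\alpha^\dagger A\, v_\beta\big)\, \bar u^\alpha w^\beta \ .
\end{align}
Restricting to the diagonal $u=w=t$ and using the linear independence of the monomials $\bar t^\alpha t^\beta$ as real-analytic functions of $t$, the hypothesis forces every coefficient $v_\alpha^\dagger A\, v_\beta$ to vanish; hence $F$ vanishes on a full polydisk around $(\xi,\xi)$, not merely on the diagonal. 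Viewing $F$ as a genuine holomorphic function on the connected complex manifold $\overline{\cM}\times\cM$ and invoking the identity theorem then propagates this local vanishing to all of $\cM\times\cM$, so that $\langle u\| A\|w\rangle = 0$ for every pair $u,w\in\cM$.

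For fixed $w$ the vector $A\|w\rangle$ lies in $\cH_0$ (since $A\in End(\cH_0)$) and is orthogonal to every coherent state $\|u\rangle$; as these span $\cH_0$ by definition, $A\|w\rangle = 0$ for all $w$, and therefore $A=0$. The main obstacle I anticipate is this global step rather than the local power-series argument: the sections $\|z\rangle$ exist only locally, glued by nonvanishing holomorphic transition functions, so one must verify that vanishing near the diagonal in one chart genuinely propagates across the manifold. This is precisely the standard uniqueness statement that a holomorphic function on $\overline{\cM}\times\cM$ vanishing on the maximally totally real antidiagonal $\{(\bar z,z)\}$ must vanish identically; I would invoke it directly, or reduce to it via connectedness and the several-variable identity theorem, taking $\cM$ connected (or arguing component by component).
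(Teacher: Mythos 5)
Your proof follows essentially the same route as the paper: form the two-point kernel $\langle y\|A\|z\rangle$, holomorphic in $z$ and antiholomorphic in $y$, and use its vanishing on the diagonal to conclude it vanishes identically, whence $A\|z\rangle$ is orthogonal to the spanning set of coherent states and $A=0$. The only difference is that you spell out (via the power-series/linear-independence-of-monomials argument) the step the paper delegates to a citation of Perelomov, which is a correct and welcome elaboration.
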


\begin{proof}
 
 Consider the function
 \begin{align}
  A(\bar y,z) := \langle y\|A\|z\rangle
 \end{align}
 where $\|z\rangle, \|y\rangle$ are local holomorphic sections of the coherent states in a neighborhood of $\xi\in\cM$.
Clearly this function is holomorphic in $z$ and in $\bar y$.
 By assumption, the restriction of $A(\bar y,z)$ to the diagonal 
 $A(\bar z,z) = \langle z\|A\|z\rangle$ vanishes
 identically. But then the standard properties of holomorphic 
 functions imply (cf. \cite{Perelomov:1986tf}) that $A(\bar y,z)\equiv 0$ identically. 
 This argument applies 
 near any given point $\xi\in\cM$, which  implies that  $A =0$.

\end{proof}

Using this lemma, we can establish the diagonal realization of 
operators via coherent states:

\begin{thm}
\label{thm:diag-coh}
 
 Let $|x\rangle$ be the (normalized)
 coherent states of a quantum K\"ahler manifold $\cM$,
 and $\cH_0\subset\cH$ their linear span.
 Then all operators $A\in End(\cH_0)$ can be written as 
 \begin{align}
   A =  \int_\cM A(x) \,|x\rangle \langle x|
  \label{A-diag-rep}
 \end{align}
 for some suitable complex-valued function $A(x)$ on $\cM$.
 
\end{thm}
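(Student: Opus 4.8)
The plan is to read the claim as the assertion that the quantization map $\cQ$ of \eq{Q-map}, regarded as a map into $End(\cH_0)$, is surjective, and to derive this surjectivity directly from the injectivity of the symbol map that Lemma \ref{lemma-diagonal} already supplies. Since $|x\rangle\in\cH_0$ for every $x$, each projector $|x\rangle\langle x|$ restricts to an element of $End(\cH_0)$, so $\cQ(A(x)) = \int_\cM A(x)\,|x\rangle\langle x|$ indeed lands in $End(\cH_0)$; the content of the theorem is that \emph{every} $A\in End(\cH_0)$ arises this way. As $End(\cH_0)$ is finite-dimensional, it suffices to show that the image of $\cQ$ has trivial orthogonal complement with respect to the Hilbert--Schmidt inner product $\langle B,C\rangle = Tr(B^\dagger C)$.

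First I would make explicit the adjunction between $\cQ$ and the symbol map \eq{symbol-M}. For $B\in End(\cH_0)$ and any test function $\phi\in\cC(\cM)$,
\begin{align}
 Tr\big(B^\dagger \cQ(\phi)\big)
  = \int_\cM \phi(x)\, Tr\big(B^\dagger |x\rangle\langle x|\big)
  = \int_\cM \phi(x)\, \langle x|B^\dagger|x\rangle \ ,
\end{align}
so $\cQ$ is precisely the transpose of the symbol map, relative to the Hilbert--Schmidt pairing on $End(\cH_0)$ and the integration pairing on $\cC(\cM)$. Consequently, an operator $B$ is orthogonal to the entire image of $\cQ$ if and only if
\begin{align}
 \int_\cM \phi(x)\, \langle x|B^\dagger|x\rangle = 0 \qquad \forall\, \phi\in\cC(\cM) \ .
\end{align}

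The decisive step then invokes non-degeneracy twice. Because $\cM$ is a quantum K\"ahler manifold, $\omega$ is non-degenerate, so the symplectic volume form $\Omega$ of \eq{int-def} is a genuine positive density; moreover the symbol $x\mapsto\langle x|B^\dagger|x\rangle$ is continuous. Hence the vanishing of the integral against every $\phi$ forces $\langle x|B^\dagger|x\rangle$ to vanish identically on $\cM$. Lemma \ref{lemma-diagonal}, applied to $B^\dagger\in End(\cH_0)$, then yields $B^\dagger = 0$, i.e. $B=0$. Therefore the image of $\cQ$ has no nonzero orthogonal complement and equals all of $End(\cH_0)$, which is exactly \eq{A-diag-rep}; a preimage $A(x)$ can be taken continuous (and, since smooth functions still surject onto the finite-dimensional target, even smooth).

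I expect the only genuinely delicate point to be the passage from ``the integral vanishes against all test functions'' to ``the symbol vanishes pointwise'', which rests on $\Omega$ being a bona fide volume form --- and this is exactly where the K\"ahler hypothesis (non-degeneracy of $\omega$) is indispensable, since in degenerate cases such as the minimal fuzzy torus the integration pairing collapses and the whole argument breaks down. A secondary point worth stating carefully is that the resulting $A(x)$ is genuinely a function on the abstract quantum space $\cM$ rather than on $\tilde\R^D$: this is automatic here because both $\cQ$ and the coherent-state label live on $\cM$, so no descent through the equivalence relation \eq{cM-equivalence-def} is required.
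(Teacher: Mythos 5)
Your proposal is correct and follows essentially the same route as the paper: both arguments take a Hilbert--Schmidt orthogonal complement of the image of $\cQ$, deduce from the vanishing of the integral pairing against all test functions that the symbol $\langle x|B|x\rangle$ vanishes identically, and then conclude $B=0$ via Lemma \ref{lemma-diagonal}. Your extra remarks on the non-degeneracy of $\Omega$ and the adjointness of $\cQ$ and the symbol map make explicit points the paper leaves implicit, but the proof is the same.
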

 
 Note that if the holomorphic coherent states $\|x\rangle$ are used instead of the normalized $|x\rangle$, 
 then $A(x)$ might have some singularities.

\begin{proof}
 
 Assume that the subspace in $End(\cH_0)$ spanned by the rhs of \eq{A-diag-rep} 
 is smaller than $End(\cH_0)$. Let $B\in End(\cH_0)$ be in its orthogonal 
 complement w.r.t. the Hilbert-Schmidt metric.
 Then
 \begin{align}
  0 = Tr( A B) = \int_\cM  A(x) \langle x|B|x\rangle \qquad \forall A(x)\in\cC(\cM). 
 \end{align}
But this implies $\langle x|B|x\rangle = 0 \ \forall x\in\cM$, and then by Lemma 
\ref{lemma-diagonal} it follows that $B=0$.
  
 \end{proof}
 
Consider again the span
$\cH_0\subset \cH$ of all quasi-coherent states $|x\rangle$.
It is natural to conjecture  
\begin{conj}
 For every irreducible matrix configuration, $\cM$ is connected, and
 the quasi-coherent states
 are over-complete, i.e.
 \begin{align}
  \cH_0 = \Big\langle |x\rangle ; x\in\tilde\R^D\Big\rangle_\C  = \cH \ .
 \end{align}
 
\end{conj}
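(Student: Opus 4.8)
The plan is to treat the two assertions separately, since they are of quite different character. Connectedness of $\cM$ is the more accessible one: by construction $\cM$ is the continuous image of $\tilde\R^D$ under the map ${\bf q}$ of \eq{q-map}, so it suffices to show that $\tilde\R^D = \R^D\setminus\cK$ is connected. By the von Neumann--Wigner codimension count, for a generic irreducible configuration the locus where the lowest eigenvalue of $H_x$ becomes degenerate (and cannot be continued) has real codimension three; since removing a set of codimension $\geq 2$ from $\R^D$ leaves a connected set, $\tilde\R^D$ — and hence $\cM$ — is connected. The residual difficulty is to rule out non-generic singular sets $\cK$ of codimension one, which could a priori split $\R^D$ into chambers; one would then have to argue that the quasi-coherent states match up across $\cK$ after passing to the quotient, so that $\cM$ stays connected even when $\tilde\R^D$ is not.

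For over-completeness my strategy is to reduce it to an invariance property and then invoke irreducibility. Concretely, it is enough to show that $\cH_0$ is invariant under every $X^a$. Indeed, since the $X^a$ are hermitian, invariance of $\cH_0$ implies invariance of $\cH_0^\perp$, so the orthogonal projector $P_0$ onto $\cH_0$ commutes with all $X^a$; irreducibility then forces $P_0 \in \{0,\one\}$, and since $\cH_0 \neq 0$ we conclude $\cH_0 = \cH$. Thus the entire content lies in the statement $X^a\,\cH_0 \subseteq \cH_0$. To attack this, I would exploit that $\cH_0$, being a closed (finite-dimensional) subspace containing the smooth section $|x\rangle$, automatically contains all its derivatives $\del^\alpha|x\rangle$, these being limits of difference quotients of vectors in $\cH_0$; by real-analyticity of the section, $\cH_0$ is in fact the span of the Taylor coefficients $\del^\alpha|x\rangle|_{x_0}$ at a single base point. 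The natural identity is \eq{deform-eigenstate}, rewritten as
\begin{align*}
 X_a|x\rangle = (x_a - \del_a\l)|x\rangle + (H_x - \l)\,\del_a|x\rangle \ ,
\end{align*}
which expresses $X_a|x\rangle$ through $|x\rangle$ and $\del_a|x\rangle$ up to the action of $H_x - \l$.

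The main obstacle — and the reason this remains a conjecture — is that the term $(H_x-\l)\del_a|x\rangle$ reintroduces the matrices quadratically through $H_x = \frac12\sum_b (X_b - x_b)^2$, so that controlling $X_a$ on order-$k$ derivatives requires controlling $X_b^2$ on order-$(k+1)$ derivatives: a regress that raises the derivative order and does not close under a naive induction. One would have to use the finite-dimensionality of $\cH_0$ to truncate this tower, for instance by showing that the linear relations among the Taylor coefficients at $x_0$ which hold in $\cH_0$ propagate to relations compatible with the $X_a$-action. A possibly cleaner route is to pass to the boundary at infinity: since $H_x \to \frac12|x|^2 - |x|\,e\!\cdot\! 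X + O(1)$ for $x = |x|\,e$, the state $|x\rangle$ converges as $|x|\to\infty$ to the top eigenvector $v_e$ of $e\!\cdot\! X$, which therefore lies in $\cH_0$ for every $e \in S^{D-1}$; the claim then reduces to showing that these $v_e$ span $\cH$. Differentiating $(e\!\cdot\! X)v_e = \mu(e)\,v_e$ along $S^{D-1}$ again expresses $X_b v_e$ through $v_e$ and $\del v_e$ up to the action of $e\!\cdot\! X$, so the same regress resurfaces.

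I therefore expect that breaking this regress is the genuinely hard part. The most promising input seems to be a rigidity argument in the spirit of Lemma \ref{lemma-diagonal}: combine the analyticity of $x\mapsto|x\rangle$ with irreducibility to exclude the existence of a proper reducing subspace containing all quasi-coherent states, rather than trying to verify $X^a$-invariance of $\cH_0$ by direct manipulation of the identities above. I do not expect the statement to follow from those identities alone, which is consistent with its status as a conjecture.
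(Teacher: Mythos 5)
The statement you were asked to prove is in fact left as a conjecture in the paper: the author offers only a semi-classical heuristic --- namely that \eq{one-approx} together with \eq{X-comm-Q} would produce a central element for every connected component of $\cM$, which irreducibility then forces to be proportional to $\one$ --- plus the suggestion that one might establish the general case by continuing $|x\rangle$ through the singular set $\cK$ to recover all eigenstates of $H_x$. So there is no complete proof in the paper to match against, and your closing assessment that the statement does not follow from the available identities is consistent with its status there. On the substance: your reduction of over-completeness to $X^a$-invariance of $\cH_0$, followed by the observation that the projector $P_0$ then commutes with all $X^a$ and irreducibility forces $P_0=\one$, is correct and is essentially the exact finite-$N$ skeleton underlying the paper's heuristic (your $P_0$ plays the role of $\cQ(1_\cM)$, which has range $\cH_0$). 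The regress you identify in the term $(H_x-\l)\,\del_a|x\rangle$ of \eq{deform-eigenstate} is precisely the obstruction that the paper sidesteps by retreating to the almost-local regime, where that term is controlled by approximate locality of the $X^a$; your boundary-at-infinity variant runs into the same wall for the same reason. Where you genuinely diverge is on connectedness: you attack the topology of $\tilde\R^D=\R^D\setminus\cK$ via a von Neumann--Wigner codimension count, whereas the paper's route would yield connectedness of $\cM$ directly from irreducibility without any control over $\cK$. Be aware that the codimension-three count is a genericity statement for arbitrary hermitian families and need not apply to the very special quadratic family $H_x$, and that connectedness of $\tilde\R^D$ is sufficient but not necessary for connectedness of $\cM$ (the quotient \eq{cM-equivalence-def} and the closure in $\C P^{N-1}$ can reconnect chambers separated by $\cK$) --- you flag the first issue yourself. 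In short: no gap relative to the paper, since neither you nor the paper proves the statement; your analysis correctly locates the hard point, and your rigidity suggestion in the spirit of Lemma \ref{lemma-diagonal} is a reasonable direction the paper does not pursue.
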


In the semi-classical regime this follows from \eq{one-approx}
and \eq{X-comm-Q}, which would give a central element for every connected 
component of $\cM$.
A viable general strategy to show this more generally might be to show that the 
continuation of the $|x\rangle$ through the singular set $\cK$
provides all eigenstates of $H_x$. 
However, this is left as a conjecture.

In any case, we can consider the following restricted form of the
quantization map  \eq{Q-map} 
 \begin{align}
  \cQ: \quad \cC(\cM) &\to End(\cH_0) \nn\\
   \phi(x) &\mapsto  \int_\cM \phi(x) \,|x\rangle \langle x|
 \end{align}
 associating  to every classical 
 function on $\cM$ an operator or observable in $End(\cH_0)$.
 The above theorem states that $\cQ$ is surjective for quantum K\"ahler manifolds.
 This means that any given operator $A\in End(\cH_0)$ has a representation
of that form, and in fact many. The kernel of $\cQ$ is typically given by functions above some ``energy cutoff''. 
Furthermore, it follows that the operators of the form \eq{A-diag-rep} form an algebra,
and every operator can be viewed as quantized function on $\cM$.

Even though this is a very nice result,
surjectivity of $\cQ$ is rather surprising in light of the string states 
\eq{string-states}, which are highly non-local. 
Nevertheless, even such string states can be represented in the 
above diagonal form \eq{A-diag-rep},
but $A(x)$ is then rapidly oscillating and in the UV or deep quantum regime.
Therefore this diagonal representation should be used with caution, 
and  a representation in terms of non-local
string states is  more appropriate in the UV regime. 
These can naturally be interpreted as open strings on the embedded 
quantum space or brane $\tilde\cM$.

\paragraph{Completeness relation.}

In particular, the theorem \ref{thm:diag-coh} implies  that at least for 
quantum K\"ahler manifolds, the 
identity operator $\one_{\cH_0}$ can be written in terms of 
coherent states:
\begin{align}
 \one_{\cH_0} = \int_\cM  \one(x) |x\rangle \langle x|,
 \label{one-H0-coherent}
\end{align}
where the integral is defined as in 
\eq{int-def}, and $\one(x)$ is some function on $\cM$.
This gives
\begin{align}
 Tr A &= \int_\cM  \one(x)\langle x|A|x\rangle , \nn\\
 Tr(\cQ(\phi(x))) &= \int_\cM  \one(x)\phi(x) \ .
\end{align}
The natural guess is 
\begin{align}
\one_{\cH} = \int_\cM  |x\rangle \langle x| \ .
\label{complete-simple}
\end{align}
This is well-known e.g. for the quantum spaces given by quantized coadjoint orbits of compact 
semi-simple Lie groups, where it follows immediately from Schur's Lemma.
It follows more generally from \eq{X-comm-Q} at least in the semi-classical regime, 
but is not evident if $\one(x)\propto 1_\cM$ for all quantum K\"ahler manifolds.

\section{Remarks and discussion}
\label{sec:remarks}

The results and concepts discussed in this paper call for a number of remarks.

First, we only considered the case where the lowest eigenspace
$E_x$ of $H_x$ is non-degenerate. This excludes many interesting examples
such as fuzzy $S^4_N$ and fuzzy $H^4_n$ as discussed in section \ref{sec:degenerate}. If $E_x$ is an $k$-dimensional (complex) vector space, then 
much of the above analysis would go through, replacing
$\cB$ by an $U(k)$ bundle and $\omega$ by the field strength of its 
natural (Berry) connection.
Sometimes the degeneracy may also be resolved by adding extra matrices 
$X^i$. For example, the abstract quantum space of $S^4_N$ is then recognized as $\C P^{3}$, and similarly in other examples, cf. section \ref{sec:degenerate}. In other words, such degenerate 
quantum spaces can be recognized as projections of non-degenerate ones,
by dropping some $X^a$. 

There are a number of issues which ask for a better understanding.
One of them is the relation between the symplectic volume of $\cM$
and the dimension of the Hilbert space \eq{Tr-Q}. Even though equality 
holds in the standard examples, it is violated for 
the minimal fuzzy torus.
Results from geometric quantization suggest a more complicated 
relation, and it would be desirable to have quantitative 
results for a large class of quantum spaces.
Furthermore, it would be very important to have a more general derivation 
or qualification of the completeness relation \eq{one-approx}.

Another open issue is the compactness of $\cM\subset\C P^{N-1}$ for finite-dimensional $\cH$.
It may be tempting to conjecture that all $\cM$ are compact, but
the fuzzy disk \cite{Lizzi:2003ru} is a candidate for a non-compact 
quantum space, which remains to be elaborated. However, the closure of 
$\cM$ in $\C P^{N-1}$ is clearly compact, and it would be nice to understand 
this in more detail.

Small deformations of the basic quantum K\"ahler manifolds $\cM_0$ of dimension $m<D$ typically lead to an ``oxidation'' $\cM$
corresponding to some tubular neighborhood of 
$\cM_0$. This leads to the idea of fuzzy extra dimensions
\cite{Aschieri:2006uw,Aoki:2014cya}.
On the other hand, it is well-known that 
adding a small perturbation to some quantum manifold $\cM$
 can be viewed 
as a gauge field on $\cM$, which becomes dynamic
in Yang-Mills matrix models. Relating 
this field-theoretic point of view with the above geometric point 
of view provides useful insights,
and one may hope to find further statements on stability and/or 
non-renormalization in this way.
Similar considerations lead to the emergent gravity 
approach based on Yang-Mills matrix models 
\cite{Yang:2006dk,Steinacker:2010rh}.

Finally, the present analysis is restricted to the case of 
irreducible matrix configurations.
If the matrix configuration is reducible, $\cH = \oplus \cH_i$ decomposes into the 
orthogonal sum of irreducible subspaces, and the above considerations apply 
to all $\cH_i$. This could be viewed as a stack of branes.
In particular, commuting matrix configurations (cf. \cite{Aoki:1998bq})
have a large stabilizer  $U(1)^N$ under the adjoint action of $U(N)$,
so that their 
$U(N)$ gauge orbit in Yang-Mills matrix models has smaller dimension 
than that of  irreducible (noncommutative) matrix configurations. But then their  contribution in the ''path`` integral over all matrices is negligible, which defines the quantum theory. 
Therefore irreducible 
matrix configurations as considered here are expected to play the central
role in these models.

\subsection{Dirac operator}

The present framework has a natural extension to spinors
and Dirac-type operators. Namely, for any matrix configuration $X^a,
a=1,...,D$ we can consider \cite{Berenstein:2012ts,deBadyn:2015sca,Karczmarek:2015gda,Schneiderbauer:2016wub}
\begin{align}
  \slashed{D}_x = \Gamma_a(X^a - x^a), \qquad x^a \in \R^D
  \label{Dirac-x}
\end{align}
acting on $\cH \otimes \C^s$. 
Here $\Gamma_a$ are the gamma matrices generating the Clifford algebra of $SO(D)$  on the irreducible representation $\C^s$.
$\slashed{D}_x$ arises as off-diagonal part of the matrix Dirac operator\footnote{ 
A chirality operator for $\slashed{D}$ is typically only recovered in the semi-classical regime.}
$\slashed{D}=\Gamma_a]X^a,.]$ in  Yang-Mills matrix models  such as the IIB or IKKT model, 
for the matrix configuration extended by a point  brane $X^a \oplus x^a$.
It describes a fermionic string stretched between the 
brane and the point $x^a$.
Quite remarkably, numerical investigations \cite{Schneiderbauer:2016wub} strongly suggest that 
 the Dirac operator $\slashed{D}_x$ always has exact zero modes
 \begin{align}
 \slashed{D}_x|x,s\rangle = 0 \ 
\end{align}
 at $\cM$, so that there is no need to introduce 
 the lowest eigenvalue function $\l(x)$. This can be justified rigorously for 2-dimensional branes \cite{Berenstein:2012ts},
 and some heuristic reasons can be given also in more general cases;
 see \cite{Berenstein:2012ts,deBadyn:2015sca,Karczmarek:2015gda} for further work.
 However,  the presence of extra structure 
 due to the spinors obscures the relation with the quasi-coherent states and $\cM$ as 
 introduced here. 
 This is certainly an interesting topic for further research.

\subsection{Effective metric and relation with matrix models}
\label{sec:eff-metric-MM}

The considerations in this paper are motivated by Yang-Mills matrix models, whose 
solutions are precisely matrix configurations as considered here.
Fluctuations in these models are governed by the 
{\em matrix Laplacian}
\begin{align}
 \Box := \d_{ab} [X^a,[X^b,.]]: \quad End(\cH) \to End(\cH) \ .
 \label{Box}
\end{align}
The displacement Hamiltonian arises as off-diagonal part of the matrix Laplacian for a point or probe brane 
added to the matrix configuration
\cite{Schneiderbauer:2016wub}, i.e. for $X^a \oplus x^a$ 
acting on $\cH\oplus\C$. 
It describes a string stretched between the 
brane and the point $x^a$.
This can also be viewed as a special case of 
intersecting branes \cite{Chatzistavrakidis:2011gs}, one brane being the 
point probe.

To understand the effective metric in matrix models,
consider the inner derivations 
\begin{align}
 [X_a,.] \sim i\theta^{a\mu} \del_\mu 
\end{align}
acting on $End(\cH)$ resp. $\cC_{\rm IR}(\cM)$, which
 are (quantizations of)
Hamiltonian vector fields on $\cM$ for almost-local quantum spaces.
By considering the inner product 
$\langle \Phi,\Psi\rangle := Tr([X^a,\Phi^\dagger][X_a,\Psi])$ on $Loc(\cH)$,
one can then show \cite{Steinacker:2010rh} that 
\begin{align}
 \Box \sim e^{\sigma}\Box_G
 \label{Box-G}
\end{align}
where $G$ is the {\bf effective metric} on $\cM$  given by 
\begin{align}
G^{\mu\nu} &= e^{-\sigma}\,\theta^{\mu\mu'} \theta^{\nu\nu'}
 g_{\mu'\nu'}  , \qquad 
  e^{-\sigma} = \frac {|G^{\mu\nu}|^{1/2}}{|\theta^{\mu\nu}|^{1/2}} \ 
\end{align}
for $\dim\cM >2$. This can be viewed as open-string metric, and it
provides the starting point of the emergent geometry and gravity
considerations in \cite{Steinacker:2010rh,Steinacker:2019fcb}.
In the two-dimensional case, the underlying Weyl invariance leads to 
a different interpretation of $\Box$, which is discussed in 
\cite{Arnlind:2012cx}.

In the reducible case, $\cM$ decomposes into a foliation of 
symplectic leaves. Then the effective metric is non-vanishing only 
along this foliation, i.e. it vanishes along the transversal directions.
In the context of Yang-Mills matrix models, this means that fluctuation 
modes on such backgrounds only propagate along the symplectic leaves, 
so that the resulting gauge theory is lower-dimensional.
This happens on any superficially odd-dimensional quantum space, or 
e.g. on $\kappa$ Minkowski space \cite{Lukierski:1993wx} in dimensions larger than 2.

\section{Examples}
\label{sec:examples}

\subsection{The fuzzy sphere}
\label{sec:fuzzy-S2}

The fuzzy sphere 
$S^2_{N}$ \cite{hoppe1982QuaTheMasRelSurTwoBouStaPro,Madore:1991bw} is a 
quantum space defined in terms of three $N \times N$ hermitian matrices
\begin{align}
 X^a =  \frac{1}{\sqrt{C_N}}\, J^a_{(N)}, \qquad \ a=1,2,3
 \label{fuzzy-S2def}
\end{align}
where $J^a_{(N)}$ are the generators of the $N$-dimensional irrep  
of $\msu(2)$ on $\cH = \C^N$, and 
$C_N= \frac 14(N^2-1)$ is the value of the quadratic Casimir.
They satisfy the relations
\begin{equation}
[ X^{{a}}, X^{{b}} ] = \frac{i}{\sqrt{C_N}}\varepsilon^{abc}\, X^{{c}}~ , 
\qquad \sum_{{a}=1}^{3} X^{{a}} X^{{a}} =  \one 
\end{equation}
choosing the normalization  \eq{fuzzy-S2def} such that the radius is one.
The displacement Hamiltonian is
\begin{equation}
H_x=\frac 12\sum_{a=1}^{3}\left(X^{a}-x^{a}\right)^{2}
=\frac 12(\one+|x|^2) -\sum_{a=1}^{3}x^{a}X^{a}
\label{Hx-S2-fuzzy}
\end{equation}
where $|x|^2 = \sum_a x_a^2$.
Using $SO(3)$ invariance, it suffices
to consider the north pole $x=(0,0,x^{3})=:n$
where
\begin{equation}
H_x =\frac 12 (\matunity+|x|^{2}) - |x|\,X^{3}
\label{eq:rotated_lapl}
\end{equation}
assuming $x^{3}>0$ to be specific.
Hence the ground state of $H_x$ is given by the highest weight vector
$|n\rangle := \ket{\frac{N-1}{2},\frac{N-1}{2}}$ of the $\msu(2)$ irrep $\cH$,
and the eigenvalue is easily found to be \cite{Schneiderbauer:2016wub}
\begin{equation}
\l(x)=\frac 12(1+|x|^{2})-|x|\sqrt{\frac{N-1}{N+1}} \ .
\end{equation}
All other quasi-coherent states are obtained by $SO(3)$ acting on $|n\rangle$, hence 
the abstract quantum space $\cM$ is given by the group orbit
\begin{align}
 \cM = SO(3)\cdot |n\rangle =  SO(3)/_{U(1)} \cong S^2\subset \C P^{N-1} \ .
\end{align}
Note that the quasi-coherent states are constant along the radial lines 
in agreement with \eq{V-anihil},
\begin{align}
 |x\rangle = |\a x\rangle\qquad \mbox{for} \quad \a > 0 \ .
\end{align}
The equivalence classes $\cN$ consist of the  radial lines emanating
from the origin, and the would-be symplectic form $\omega_{ab}$ and the quantum metric $g_{ab}$  vanish if
any one component is radial. 
The minima of $\l(x)$ on $\cN_x$ describe a sphere with radius
$|x_{0}|=\sqrt{\frac{N-1}{N+1}}=1+\mathcal{O}(\frac{1}{N})$.
This coincides precisely with the  embedded quantum space \eq{embedded-M-def} 
\begin{equation}
\tilde\cM = \{\langle x|X^a|x\rangle\}
 = \big\{x\in\realv^{3}:\,|x|=\sqrt{\frac{N-1}{N+1}} \big\} \cong S^2 \ 
\end{equation}
defined by the expectation value ${\bf x^a}$ \eq{M-embedding-symbol}, in accordance with \eq{X-X0-proj}.
At the singular set $\cK=\{0\}$ the Hamiltonian is
$H_0 = C^2 \one$, so that all energy levels become degenerate and cross. Following 
$|x\rangle$ along the radial direction through the origin, it 
turns into the highest energy level. 
It is easy to see that the would-be symplectic form $\omega$ is the unique $SO(3)$-invariant 2-form on $\cM$
which satisfies the quantization condition \eq{quant-cond-S2} with $n=N$.
Moreover, the abstract quantum space $\cM\cong S^2\subset\C P^{N-1}$
is a quantum K\"ahler manifold, since the complex tangent space \eq{tangent-space-C} is 
one-dimensional, spanned   by 
\begin{align}
 T_{n,\C}\cM = \big\langle J^-|n\rangle \big\rangle_\C \
\end{align}
(at $|n\rangle\in\cM$).
This holds because  $|n\rangle$ is the highest weight state, so that 
\begin{align}
 J^+|n\rangle = 0 \ ;
\end{align}
therefore the two tangent vectors $\cX^1|n\rangle, \cX^2|n\rangle \in T_{n}\cM$ \eq{tangent-space-R}
are related by $i$, while  $\cX^3|n\rangle$ vanishes at $n$.
Indeed, it is well-known that the coherent states on $S^2_N$ form a Riemann sphere,
and the (quasi-) coherent states coincide with the coherent states
introduced in \cite{Perelomov:1986tf}.

All this holds for any $N\geq 2$. The coherence length is of order 
\begin{align}
 L_{\rm coh} \approx L_{NC} \sim \frac{1}{\sqrt{N}} \ 
\end{align}
in the given normalization. Hence
for sufficiently large $N$, the almost-local operators 
comprise all polynomials in $X^a$ up to order $O(\sqrt{N})$
(depending on some specific bound), so that $S^2_N$ 
is  an almost-local quantum space. 
In contrast for the {\bf minimal fuzzy sphere} $S^2_2$
with $N=2$, the generators reduce to the Pauli matrices $X^a = \sigma^a$,
and the (quasi)coherent states form the well-known Bloch sphere 
$\cM = S^2\cong\C P^1$. 
This is still a quantum K\"ahler manifold even though the semi-classical regime 
is trivial and contains only the constant functions $Loc(\cH) = \C\one$, 
since the coherence length is of the same order 
as the entire space $\cM$.

\subsection{Quantized coadjoint orbits for compact semi-simple Lie groups}
\label{sec:QCO}

The above construction generalizes naturally to quantized coadjoint orbits for
any compact semi-simple Lie group $G$ with Lie algebra $\mg$. 
For any irreducible representation $\cH_\L$
with highest weight $\L = (n_1,...,n_k)$ labeled by Dynkin indices $n_j$,
the matrix configuration 
\begin{align}
 X^a = c\, T^a, \qquad a=1,...,D
 \label{qco-config}
\end{align}
defines a  quantum K\"ahler manifold $\cM \cong G/K$. 
Here $T^a$ are orthogonal generators of $\mg \cong \R^D$ acting on $\cH_\L$,
$K$ is the stability group of the highest weight $\L$, and
$c$ is some normalization constant. 
Then  the displacement Hamiltonian is 
\begin{align}
 H_x = C^2(\mg) + \frac 12 x_a x^a   - x_a T^a\ 
\end{align}
where $C^2(\mg) \propto \one$ is the quadratic Casimir.
Using $G$-invariance, we can  assume that $x$ is in  
(the dual of) the Cartan subalgebra and has maximal weight. Then  $|x\rangle = |\L\rangle$
is the highest weight state, so that
the quasi-coherent states are the group orbit
$\cM = G\cdot |\L\rangle \cong G/K$
of the highest weight state with stabilizer $K$. This is  a quantum K\"ahler manifold 
due to the highest weight property,
and the quantum metric $g_{ab}$ \eq{g-def} and the 
symplectic form $\omega$ \eq{omega-def} 
are the canonical group-invariant structures on the K\"ahler manifold $\cM$.
For large Dynkin indices $n_j\geq n\gg 1$, 
the almost-local operators 
comprise  all polynomials in $X^a$ up to some order $O(\sqrt{n})$, 
so that $\cM$ is  an almost-local quantum space.
This is essentially the well-known story of quantized 
coadjoint orbits, and the (quasi-) coherent states coincide with the coherent states
introduced in \cite{Perelomov:1986tf}, cf. \cite{Grosse:1993uq}. Perhaps less known is the fact that 
if some of the $n_j$ are small, 
$\cM$ can be viewed as ``oxidation'' of some lower-dimensional brane, 
more precisely as a bundle over $\cM_0$ whose fiber is 
very ``fuzzy''. For an  application of such a structure 
see e.g. section 4.2 in \cite{Sperling:2018hys}.

This construction generalizes further to  highest weight (discrete series) 
unitary irreducible representation of non-compact semi-simple Lie groups.
A particularly interesting example is given by the ``short'' series of unitary irreps of $SO(4,2)$
known as singletons, which lead to the fuzzy 4-hyperboloids $H^4_n$ discussed below, and 
 to quantum spaces which can be viewed as cosmological space-time \cite{Sperling:2019xar}.

\paragraph{(Minimal) fuzzy $\C P^{N-1}_N$.}

As an  example we consider minimal fuzzy $\C P^{N-1}_N$,
which is obtained using the above general construction for 
$G=SU(N)$ and its fundamental representation $\cH = (1,0,...,0)$, so that
$G/K \cong \C P^{N-1}$.
This is the quantum K\"ahler manifold obtained from the matrix configuration
\begin{align}
 X^a = \l^a \quad \in End(\cH), \qquad \cH = \C^N
\end{align}
for $a=1,...,N^2-1$,
where $\l^a$ are a (Gell-Mann) ON  basis  of $\msu(N)$
in the fundamental representation. 
Then
$End(\cH)\cong (0,...,0) \oplus (1,0,...,0,1)$ can be viewed as a minimal 
quantization of functions on $\C P^{N-1}$.
The quantization map
\begin{align}
 \cQ(\phi) = \int_{\C P^{N-1}}|x\rangle\langle x| \phi(x)
\end{align}
is then the partial 
inverse of the symbol map, apart from the constant function:
\begin{align}
 \cQ(\langle x|\Phi|x\rangle) = c \Phi\qquad \mbox{if} \ \ Tr(\Phi) = 0 \ 
\end{align}
for some $c>0$.
Near $|\L\rangle$, 
the quasi-coherent states $|x\rangle$ can be organized as holomorphic sections 
\begin{align}
 \|z\rangle = \exp(z^k T^+_k)|\L\rangle \ ,
\end{align}
where 
the $T^+_k, \ k=1,...,N-1$ are the rising operators of a Chevalley basis of 
$\msu(N)$. Hence fuzzy $\C P^{N-1}_N$ is a quantum K\"ahler manifold
which coincides with $\C P^{N-1}$, with
K\"ahler form
\begin{align}
 \omega_{\bar k l} = \frac{\del}{\del \bar z^k}\langle z\|\frac{\del}{\del z^l}\|z\rangle \ .
\end{align}

\paragraph{Squashed $\C P^2_N$.}

Further quantum spaces can be obtained by  projections of 
quantized coadjoint orbits. For example, starting from 
fuzzy $\C P^2_N$ with $\cH = (N,0)$, consider the following 
matrix configuration
\begin{align}
 X^a = T^a, \qquad a=1,2,4,5,6,7
\end{align}
dropping the Cartan generators $T_3$ and $T_8$ from the (Gell-Mann) basis
of $\msu(3)$. Then the displacement Hamiltonian can be written 
as
\begin{align}
 H_x =  \bar H_x - \frac 12(X_3-x_3)^2 -  \frac 12(X_8-x_8)^2 
 \label{disp-H-squashed}
\end{align}
where $\bar H_x$ is the displacement Hamiltonian for $\C P_N^2$.
Although the quasi-coherent states $|x\rangle$ are not known in this case,
they are close to those  of $\C P^2_N$ in the large $N$ limit,
cf. \cite{Schneiderbauer:2016wub}.
Indeed then the last two terms in \eq{disp-H-squashed} are small, and 
$0 < \l(x) \leq \bar \l(x)$ gives an upper bound for $\l$. 
This implies that the displacement is  small, and
\begin{align}
 \cM \approx \C P^2 \subset \C P^{N(N+3)/2} \ .
\end{align}
Again, the concept of
the abstract quantum space is  superior to the notion of an embedded brane,  which is a complicated self-intersecting variety in $\R^6$
related to the Roman surface \cite{Steinacker:2014lma}.

\subsection{Degenerate cases}
\label{sec:degenerate}

\paragraph{The fuzzy 4-sphere  $S^4_N$.}

Now consider again the quantized coadjoint orbit of $SU(4)\cong SO(6)$
acting on the highest weight irrep $\cH_\L$ with $\L=(N,0,0)$. We have seen just 
that the matrix configuration using 
all $\mso(6)$ generators $\cM^{ab} = - \cM^{ba}$ as in \eq{qco-config}
would give fuzzy 
$\C P^3_N$, with coherent states acting on the highest weight 
state $|\L\rangle$. Now instead of using all  $\cM^{ab}$, 
consider the matrix configuration defined by the following 5 hermitian matrices
\begin{align}
 X^a = \cM^{a6} \qquad \in End(\cH_\L), \qquad a = 1,...,5 \ .
\end{align}
Using $SO(5)$ invariance, it suffices to consider the 
displacement Hamiltonian at $x=(0,0,0,0,x_5)$,  
\begin{align}
 H_x = \frac 12\sum_{i=1}^4 X_i^2 + \frac 12 (X_5 - x_5)^2
  = \frac 12 (R^2 + x_5^2) \one - x_5 X^5 
\end{align}
since $ \sum_a X_a^2 = R^2\one$ for $R^2 = \frac 14 N(N+4)$, cf. \cite{Grosse:1996mz,Medina:2012cs}.
Now $|\Lambda\rangle$ is by construction an eigenstate of $X^5$
which commutes with $SO(4)$, with maximal eigenvalue.
Therefore the lowest eigenspace $E_x$ of $H_x$ is spanned by the 
orbit $SO(4)\cdot |\L\rangle \cong S^2$, which spans 
a $N+1$-dimensional complex vector space. This provides an example of 
a degenerate quantum space. The abstract quantum space
$\cM$ is obtained by acting with $SO(5)$ on this $S^2$, which is easily 
seen to recover 
\begin{align}
 \cM \cong \C P^3 \ \subset \C P^{\dim\cH -1}
\end{align}
which is an equivariant $S^2$ bundle over $S^4$.
The $E_x$ naturally form a $SU(N+1)$ bundle $\cB$ over $S^4$,
and $\omega$ is replaced by an $SU(N+1)$ connection.
Again the concept of an abstract quantum space greatly helps 
to understand the structure, as it resolves the degeneracy of the 
quasi-coherent states. Moreover $\cM$ is clearly a K\"ahler manifold,
and theorem \ref{thm:diag-coh} holds.

\paragraph{The fuzzy 4-hyperboloid $H^4_n$.}

Using an analogous construction for $SO(4,2)$ and its singleton 
irreps $\cH_n$ labeled by $n\in\N$, one obtains 
fuzzy $H^4_n$ \cite{Sperling:2018xrm,Hasebe:2012mz}. The corresponding matrix
configuration is given by the following 5 hermitian operators
\begin{align}
 X^a = \cM^{a5} \qquad \in End(\cH_\L), \qquad a = 0,...,4 \ .
\end{align}
However, it is more appropriate here to define the displacement Hamiltonian
using $\eta_{ab}$, so that $SO(4,1)$ is preserved. Then 
we can assume that $x = (x_0,0,0,0,0)$, so that
\begin{align}
 H_x = \frac 12\sum_{i=1}^4 X_i^2 - \frac 12 (X_0 - x_0)^2
  = \frac 12 (R^2 - x_0^2) \one + x_0 X^0 \ .
\end{align}
Then the resulting quasi-coherent states form an abstract quantum space 
$\cM \cong\C P^{1,2}$, which is an $S^2$ bundle over $H^4$.
It is a K\"ahler manifold, and theorem \ref{thm:diag-coh} 
 still holds in a weaker sense \cite{Sperling:2018xrm}.
This in turn is the basis of the cosmological space-time solution 
$\cM^{3,1}_n$ with an effective metric of FLRW type, as 
discussed in \cite{Sperling:2019xar,Steinacker:2019awe}.

\paragraph{Minimal fuzzy $H^4_0$.}

A particularly interesting example 
is obtained from $H^4_n$ for $n=0$, which is not a 
quantized coadjoint orbit and not even symplectic.
In that case $E_x$ is one-dimensional, and one can check that 
$\langle x|\del_a|x\rangle = 0 = i A_a$ and 
$\langle x|[X^a,X^b]|x\rangle = 0$. Therefore 
the would-be symplectic form $\omega$ vanishes. The abstract quantum space
is then
\begin{align}
 \cM = H^4
\end{align}
but it carries a trivial line bundle $\tilde\cB$. It  still satisfies the  quantum K\"ahler\footnote{Note that $\dim\cH = \infty$ here, 
so that we cannot conclude that $\cM$ is K\"ahler in the usual sense.}
condition \eq{Kahler-cond} and theorem \ref{thm:diag-coh}  
should hold (using the $SO(4,1)$-invariant integral) in a weaker sense.
However this is not an almost-local quantum space, 
and there is no semi-classical regime.

\subsection{The minimal fuzzy torus}
\label{sec:min-fuzzy-T2}

The minimal fuzzy torus $T^2_2$ turns out 
to be a quantum manifold which is not K\"ahler, and not even symplectic.
It is defined in terms of 
\begin{align}
 U = \begin{pmatrix}
      0 & 1 \\
      1 & 0
     \end{pmatrix} = X_1 + i X_2, \qquad
V = \begin{pmatrix}
     1 & 0 \\
     0 & -1
    \end{pmatrix} = X_3 + i X_4
\end{align}
which defines 4 hermitian matrices $X_i = X_i^\dagger \in End(\C^2)$.
Noting that $[U,U^\dagger] = 0 = [V,V^\dagger]$ and
\begin{align}
(U-z)(U-z)^\dagger  
&= \begin{pmatrix}
        1 + |z|^2 & -z - z^* \\
        -z - z^* & 1 + |z|^2
       \end{pmatrix}   \nn\\
(V-w)(V-w)^\dagger 
  &=\begin{pmatrix}
       |1-w|^2 & 0 \\
       0 & |1+w|^2
      \end{pmatrix}
\end{align}
where $z = x_1 + i x_2$ and $w = x_3 + i x_4$,
the displacement Hamiltonian is
\begin{align}
 H_y &= \sum(X_i - x_i)^2 
  =  \begin{pmatrix}
        1 + |z|^2 + |1-w|^2 & -z - z^* \\
       -z - z^* & 1 + |z|^2 + |1+w|^2
      \end{pmatrix} \ .
\end{align}
The lowest eigenvalue is 
\begin{align}
 \l = 2+|z|^2+|w|^2 - \sqrt{|z+z^*|^2 + |w+w^*|^2}
\end{align}
and the corresponding quasi-coherent states are
\begin{align}
 |x\rangle \propto \begin{pmatrix}
              \sqrt{|z+z^*|^2 + |w+w^*|^2} +w^*+w \\
              z^*+z
             \end{pmatrix} \quad \in \R_+ \times \R \subset \C^2 \ .
            \label{state-T2-2}
\end{align}
These clearly depend only on the real parts of $z, w$, and 
the normalized states 
describe a half circle  in the upper half plane.
However
the two endpoints of this half-circle corresponding to $(z=1,w=-\infty)$ and $(z=-1,w=-\infty)$
describe the same state $|x\rangle = \begin{pmatrix}
                                         0 \\ \pm 1
                                        \end{pmatrix}$, and should hence 
                                        be identified. 
Thus $\cM = S^1$, which
is clearly not a K\"ahler manifold any not even symplectic.

Now consider the equivalence classes $\sim$ \eq{cM-equivalence-def} 
on $\R^4  \cong \C^2$.
All points $(z,w)\sim (z',w')\in\C^2$ with the same real parts 
are identified, 
and also  all $(z,w) \sim r(z,w) \in \R^2$ for $r>0$.
Among these, $\l$ assumes the minimum $\l=1$ for $(z,w) = (x,y) \in S^1 \subset \C^2$,
so that again\footnote{
It may seem that the state 
corresponding to the point $(z=0,w=-1)$ vanishes, but this is just 
an artefact of the improper normalization. It is easy to see that
in that case $H_y$ has indeed an eigenstate $(0,1)$ for $\l=1$.}
 $\cM \cong \C^2/_\sim \cong S^1$.

Therefore the minimal fuzzy torus $T^2_2$ should really be considered 
as a fuzzy circle.
This shows the existence of  
''exotic`` quantum spaces which are not quantized symplectic spaces,
but do not have a semi-classical regime.
There are also higher-dimensional such spaces as shown next, and 
  the above example of minimal $H^4_0$.

\paragraph{Non-K\"ahler quantum space from $T_2^2 \times T_2^2$.}

Now consider the Cartesian product of $T^2_2 \times T^2_2$,
realized through $8$ hermitian matrices $X^a_{(1)}, X^a_{(2)}$ acting on $\C^4 = \C^2 \otimes \C^2$.
All eigenstates of $H_x = H_x^{(1)} + H_x^{(2)}$ are  given by the product states
of the two eigenstates \eq{state-T2-2} of $T^2_2$, so that the ground states or quasi-coherent states are given by
\begin{align}
|x_{(1)},x_{(2)}\rangle = |x_{(1)}\rangle \otimes|x_{(2)}\rangle 
\end{align}
over $\R^8$. They are again degenerate, and inequivalent states are parametrized by 
$(x_{(1)},x_{(2)}) \in S^1 \times S^1$.
Hence the abstract quantum space is a torus $\cM \cong S^1 \times S^1$. 
The quantum tangent space is spanned by  two vectors
\begin{align}
 T_\xi\cM = \Big\langle (\del_1 |y_{(1)}\rangle) \otimes|y_{(2)}\rangle,
 |y_{(1)}\rangle \otimes (\del_2|y_{(2)}\rangle) \Big\rangle \ \cong \ \R^2
\end{align}
which are linearly independent from the two complexified vectors
$i\del_1 |y_{(1)}\rangle \otimes|y_{(2)}\rangle$ and $ i|y_{(1)}\rangle \del_2\otimes|y_{(2)}\rangle$.
Therefore $T_{\xi,\C}\cM \cong \C^2 \cong \R^4$, and
$\cM$ is not a quantum K\"ahler manifold.

\subsection{The Moyal-Weyl quantum plane}
\label{sec:Moyal-Weyl}

The Moyal-Weyl quantum plane is obtained for $X_1 = X$ and $X_2 = Y$
with $[X,Y] = i\one$. Then $\dim\cH=\infty$, 
but all  considerations can be carried over easily.
The displacement Hamiltonian 
\begin{align}
 2 H_x = (X-x)^2 + (Y-y)^2
\end{align}
is nothing but the shifted harmonic oscillator, with ground state
\begin{align}
  H_z |z\rangle = \frac 12 |z\rangle \
\end{align}
given by the standard coherent states 
\begin{align}
 |z\rangle &= U(z)|0\rangle, \qquad z = \frac{1}{\sqrt{2}}(x+iy) 
\end{align}
using the identification of $\R^2 \cong\C$. The translation operator is given 
as usual by 
\begin{align} 
 U(z) &= \exp(i(yX - xY)) = \exp(z a^\dagger - \bar z a)  ,  \nn\\
 a &= \frac{1}{\sqrt{2}}(X+iY), \qquad a^\dagger = \frac{1}{\sqrt{2}}(X-iY) \ .
\end{align}
$|0\rangle$ is the ground state of the harmonic oscillator $a|0\rangle = 0$, 
and more generally
\begin{align}
 (a-z)|z\rangle &= 0  
\end{align}
implies
\begin{align}
 \langle z |(X+iY)|z\rangle &= x+iy \ .
\end{align}
The derivatives \eq{del-nabla-X-general} are found to be
\begin{align}
 (\del_x -iA_1) |z\rangle &= -i(Y - y) |z\rangle = i\cX_1|z\rangle  \nn\\
 (\del_y-iA_2) |z\rangle &= i(X -  x) |z\rangle = i\cX_2|z\rangle
 \label{transl-Moyal}
\end{align}
where the second expressions arise from \eq{del-nabla-X-general-2}, 
which are given explicitly by 
\begin{align}
 \cX_1 &= -i[\big(H_z-\frac 12\big)^{'-1},X]   \nn\\
 \cX_2 &= -i[\big(H_z-\frac 12\big)^{'-1},Y] \ .
\end{align}
The $U(1)$ connection is found to be 
\begin{align}
 iA_1 = \langle z| \del_x |z\rangle &= -i \langle z| (Y-\frac 12 y) |z\rangle = -\frac i2 y  \nn\\
 iA_2 = \langle z| \del_y |z\rangle &= i \langle z| (X - \frac 12 x) |z\rangle = \frac i2 x
\end{align}
with field strength
\begin{align}
 F_{12} = \del_1 A_2 - \del_2 A_1 = 1 \ .
\end{align}
Therefore \eq{parallel-transport-states} becomes
\begin{align}
 |z\rangle = P\exp\Big(i\int_0^{z} (\cX_1-y) dx + (\cX_2+x) dy\Big)|0\rangle \ .
\end{align}
$\cM\cong\C$ satisfies the quantum K\"ahler condition due to
the constraint $(X+iY)|0\rangle = 0$, which states that 
$iY|0\rangle = - X|0\rangle$, so that the complex tangent space 
$T_{0,\C}\cM = T_0\cM$ coincides with the real one.
 The holomorphic coherent states 
are given by 
\begin{align}
 \|z\rangle = e^{z a^\dagger}|0\rangle 
  = e^{z a^\dagger}e^{-\bar z a}|0\rangle 
   = e^{\frac 12|z|^2} |z\rangle  \ .
\end{align}
They cannot be  normalized, since 
the map $z\mapsto\langle w\|z\rangle$ must be holomorphic and hence 
unbounded. Thus $\|z\rangle$ should be viewed as holomorphic section of 
the line bundle $\tilde\cB$.

\subsection{Commutative quantum spaces}
\label{sec:comm}

In the infinite-dimensional case, one can also consider
matrix configurations associated to commutative manifolds.
The simplest example is the circle $S^1$, which arises from the single 
operator
\begin{align}
 X = -i\del_\varphi
\end{align}
acting on $\cC^\infty(S^1)  \subset  L^2(S^1) = \cH$.  
The displacement Hamiltonian is
\begin{align}
 H_x = \frac 12 (-i\del_\varphi - x)^2 , \qquad x \in \R \ .
\end{align}
The quasi-coherent states for $x=n\in\Z$  are clearly
\begin{align}
 |n\rangle = e^{i n \varphi}, \qquad  H_n |n\rangle = 0 , \qquad n\in\Z
 \label{quasicoh-del}
\end{align}
so that $\l(\Z)=0$.  For any $x\not\in\Z$, 
all eigenstates of $H_x|\psi\rangle = E|\psi\rangle$
are given by the above states $|n\rangle$, with eigenvalue
\begin{align}
 H_x |n\rangle = (-i\del_\varphi - x)^2 e^{i n \varphi}
  = (n - x)^2 e^{i n \varphi} \ .
\end{align}
Therefore 
\begin{align}
 |x\rangle =  |n\rangle , \qquad  |n-x| < \frac 12 , \ \ n\in\Z
\end{align}
while for $x\in\Z+\frac 12$ the space $E_x$ is two-dimensional, containing
both states $|x\pm\frac 12\rangle$.
Thus the abstract quantum space is  the discrete lattice 
\begin{align}
 \cM = \Z \subset \R
\end{align}
and the quantum tangent space vanishes. 
This can be generalized to the higher-dimensional 
commutative torus $T^n$ with commutative and reducible
matrix configuration $X_\mu = -i\del_\mu$,
which  also  leads to a discrete quantum space
without further structure.
Thus classical manifolds are not well captured in the 
present framework. This can of course be treated by adding extra structure 
as in \cite{connes1995noncommutative}, but such a description is not well suited for 
Yang-Mills matrix models.

\paragraph{Acknowledgments}

I would like to thank Bernhard Lamel and Thorsten Schimannek for useful discussions and pointing me to the appropriate literature.
Related collaborations and discussions with 
Timon Gutleb, Joanna Karczmarek, Jun Nishimura, 
Lukas Schneiderbauer and Jurai Tekel are gratefully acknowledged.
Finally, John Madore's role in shaping the underlying ideas is greatly appreciated.
This work was supported by the Austrian Science Fund (FWF) grant P32086.

\section{Conclusion}

A general framework for quantum geometry was developed, 
based on general matrix configurations given in terms of $D$ hermitian 
matrices $X^a$. 
We have seen that a remarkably rich array of structures can be extracted from 
such a  matrix configuration, which provide a semi-classical picture and geometric insights. Quasi-coherent states are an optimal set of states where 
the matrices  are simultaneously ''almost-diagonal``. 
They form an abstract quantum space 
$\cM\subset\C P^{N}$, which 
allows to use geometric tools and even complex analysis. 
A class of almost-local operators $Loc(\cH)$ is characterized,
which can be understood as quantized functions on $\cM$ in some IR regime.
Moreover, a natural sub-class of matrix configurations is 
identified as quantum K\"ahler manifolds.

Although the present analysis is restricted to the case of 
finite-dimensional matrices, the concepts  
generalize to the case of selfadjoint operators on 
separable Hilbert spaces. 
This is illustrated for the Moyal-Weyl quantum plane 
 and for the fuzzy hyperboloid. In these cases,
the framework exhibits  the 
finite number of degrees of freedom per unit volume, as well as the stringy nature in the deep quantum regime. It should also be useful 
to better understand other quantum spaces such as $\kappa$ Minkowski space \cite{Lukierski:1993wx}, and to resolve a hidden internal structure 
in other spaces such as \cite{Fiore:2019rgy}
and in compact quantum spaces with infinite-dimensional $\cH$.

This framework for quantum geometry is 
particularly suited for Yang-Mills-type matrix models. 
Their description in terms of quantized symplectic spaces 
is now understood to be generic, rather than just an ad-hoc choice.
This vindicates describing the low-energy regime of such matrix models
via noncommutative field theory on the embedded 
quantum space or brane $\tilde\cM$, leading to dynamical emergent geometry 
and possibly gravity, cf. \cite{Steinacker:2010rh,Steinacker:2019awe}. However,
it is important to keep in mind that semi-classical picture 
breaks down in the UV or deep quantum regime, 
where non-local string states become dominant.
These are naturally  interpreted as open strings on the brane $\tilde\cM$.

In particular, the new insights on the structure of $\cM$
should be very useful to interpret the results of numerical simulations 
of Yang-Mills matrix models 
\cite{Nishimura:2019qal,Aoki:2019tby,Kim:2011cr,Anagnostopoulos:2020xai}. By definition, 
the quasi-coherent states 
provide an optimal basis where the matrices are ''almost-diagonal``,
which should improve upon  simpler approaches based on 
block-matrices. They can be obtained numerically 
along the lines proposed in \cite{Schneiderbauer:2016wub,lukas_schneiderbauer_2016_45045}, which can now be 
refined, notably using the abstract 
point of view as $\cM\subset\C P^{N-1}$. It should then be  
easier to disentangle the underlying geometry
from the random noise.

The framework should also be useful 
for analytical 
computations in the context of  noncommutative field theory.
Given the natural role of quantum K\"ahler manifolds in this setting,
one may hope that quantum K\"ahler manifolds 
play a special and preferred role not only from an analytical point of view,
but also as preferred solutions or configurations in  a matrix ``path integral''. For example, loop integrals analogous to \eq{complete-simple} can be formulated in terms of 
the completeness relation for string states  
\cite{Steinacker:2016nsc}.
In particular, one may hope that some sort of non-renormalization statement 
can be made on such spaces.

Finally, it would be desirable to improve some the technical
results in this paper, notably related to the completeness relation and the
regularity of $\cM$.
In particular, one would like to know to which extent the results on 
quantum K\"ahler manifolds can be generalized to generic quantum manifolds
with symplectic structure and a metric.
It would also be interesting to develop an analogous approach 
based on the 
matrix Dirac operator as sketched in section \ref{sec:remarks}, 
and to relate it to the present approach.
All these are interesting directions for future work.

\bibliography{papers}
\bibliographystyle{diss}

\end{document}